\newcommand{\dd}{\mathrm d}
\theoremstyle{definition}
\newtheorem{theorem}{Theorem}
\newtheorem{corollary}{Corollary}
\newtheorem{remark}{Remark}
\begin{document}

\title{{\bf An Efficient Monte Carlo Simulation for Radiation Transport Based on Global Optimal Reference Field}}

\author[sjtu]{Minsheng Huang}
\ead{mingo.stemon@sjtu.edu.cn}
\author[pku]{Ruo Li\corref{cor}}
\ead{rli@math.pku.edu.cn}
\author[pku,nint]{Kai Yan}
\ead{1501110028@pku.edu.cn}
\author[nint]{Chengbao Yao}
\ead{yaocheng@pku.edu.cn}
\author[sjtu,ins]{Wenjun Ying}
\ead{wying@sjtu.edu.cn}

\cortext[cor]{Corresponding author}
\address[sjtu]{School of Mathematical Sciences, Shanghai Jiao Tong University, Shanghai, P.R.China.}
\address[pku]{School of Mathematical Sciences, Peking University, Beijing, P.R.China.}
\address[nint]{Northwest Institute of Nuclear Technology, Xi'an, P.R.China.} 
\address[ins]{MOE-LSC and Institute of Natural Sciences, Shanghai Jiao Tong University, Shanghai, P.R.China.}

\begin{abstract}
The reference field method, known as the difference formulation, is a key variance reduction technique for Monte Carlo simulations of thermal radiation transport problems. When the material temperature is relatively high and the spatial temperature gradient is moderate, this method demonstrates significant advantages in reducing variance compared to classical Monte Carlo methods. However, in problems with larger temperature gradients, this method has not only been found ineffective at reducing statistical noise, but in some cases, it even increases noise compared to classical Monte Carlo methods. The global optimal reference field method, a recently proposed variance reduction technique, effectively reduces the average energy weight of Monte Carlo particles, thereby decreasing variance. Its effectiveness has been validated both theoretically and numerically, demonstrating a significant reduction in statistical errors for problems with large temperature gradients. In our previous work, instead of computing the exact global optimal reference field, we developed an approximate, physically motivated method to find a relatively better reference field using a selection scheme. In this work, we reformulate the problem of determining the global optimal reference field as a linear programming problem and solve it exactly. To further enhance computational efficiency, we use the MindOpt solver, which leverages graph neural network methods. Numerical experiments demonstrate that the MindOpt solver not only solves linear programming problems accurately but also significantly outperforms the Simplex and interior-point methods in terms of computational efficiency. The global optimal reference field method combined with the MindOpt solver not only improves computational efficiency but also substantially reduces statistical errors. 

\end{abstract}

\begin{keyword}
Thermal radiative transfer; Monte Carlo; Global optimal reference field; The difference formulation; Variance reduction; Linear programming
\end{keyword}

\maketitle

\section{Introduction}
Statistical error has long been a key challenge limiting the large-scale application of Monte Carlo methods in thermal radiative transfer \cite{howell1998monte,wollaber2016four}. Since the statistical error in Monte Carlo methods is inversely proportional to the square root of the number of particles, simply increasing the number of particles to improve computational accuracy is often impractical. To address this, various variance reduction techniques have been developed to mitigate statistical noise, including Russian roulette, splitting, and weight window methods, among others \cite{WOLLABER200808,noebauer2019monte,steinberg2022multi}. An important variance reduction technique in Monte Carlo methods is the reference field method, also referred to as the difference formulation \cite{mckinley2003comparison, BROOKS2005737}. In many thermal radiation transport problems, the radiation field closely approximates a blackbody radiation field within optically thick regions based on the current material temperature. In classical Monte Carlo simulations, changes in material energy during each time step are determined by statistically accounting for absorbed photon energy and deterministically calculating emitted photon energy in the region. Since the absorbed photon energy is derived statistically, the classical Monte Carlo method introduces statistical noise into the solution. The reference field method addresses this by introducing the concept of the difference field, or difference intensity, which is defined as the difference between the current radiation field and the blackbody radiation field at the current material temperature. The latter field serves as the reference field. 

When the reference field closely approximates the current radiation field, the ``energy" of the difference field is considered low relative to the current radiation field. Introducing the difference field eliminates the thermal emission term but introduces new source terms. In reference \cite{BROOKS2005737}, Brooks derived the thermal radiation transport equations using the difference formulation and developed a symbolic implicit Monte Carlo (IMC) method to solve these equations.
Early studies typically selected the blackbody radiation field as the reference field based on the current material temperature. In this paper, we refer to this formulation as the ``classical reference field formulation" for thermal radiative transfer and the associated variance reduction technique as the ``classical reference field method". The standard thermal radiation transport equations (equivalent to a zero reference field) are referred to as the ``standard formulation" for thermal radiation transport problems. While the reference field method has demonstrated its ability to reduce statistical errors significantly in some instances, research has also shown that in scenarios with significant temperature gradients, this method results in even larger statistical errors than before \cite{luu2010generalized, cleveland2010extension}. Consequently, this approach received limited attention. Later, Luu et al. \cite{luu2010generalized} found that when there was a significant discrepancy between the radiation field intensity and the blackbody radiation field intensity, the thermal emission term may vanish, but spatial and temporal derivative terms related to the blackbody radiation intensity still lead to elevated statistical noise. To address this issue, they proposed the generalized reference field method \cite{luu2010generalized}. However, the generalized reference field method does not provide a specific strategy for selecting the reference field. To solve this problem, we proposed the concept of the global optimal reference field method \cite{2021The}. In our previous work, rather than offering a direct numerical method for solving the global optimal reference field exactly, we developed a relatively effective selection scheme to find an acceptable and physically significant approximation for the reference field.

In this paper, we derive the generalized difference formulation of the IMC thermal radiation transport equations and provide an expression for the total energy of particles emitted at each time step. Following the original definition \cite{2021The}, we establish the global optimal reference field and reformulate its calculation as an optimization problem. A computational algorithm was developed to solve for the global optimal reference field, and it was proven that the global optimum must reside within a specific finite set. This optimization problem was successfully transformed into a linear programming problem, which was solved using the Simplex method. We further compared the computational efficiency of the global optimal reference field method against the classical IMC method for thermal radiation transfer problems. The numerical results demonstrated that the global optimal reference field method significantly reduces statistical noise. To enhance computational efficiency, we utilized the MindOpt solver \cite{mindopt}, which leverages Graph Neural Networks to solve the linear programming problem. Numerical experiments confirmed that this solver not only solves the problem more efficiently but also improves the overall performance of the global optimal reference field method in thermal radiation transport equations.


The rest of this paper is organized as follows. In Section 2, we brieﬂy reviewed the standard formulation and difference formulation of the thermal radiation transport equations. In Section 3, starting from the IMC thermal radiative transfer, we derived the generalized difference formulation of the IMC equations and then transformed the original problem of the global optimal reference field into an optimization problem. Section 4 provides an accurate computational method for the global optimal reference field. Section 5 presents relevant numerical examples, verifying the correctness and computational efficiency of our method. Finally, we make a brief conclusion in Section 6. 

\section{Mathematical model}\label{model}
Assuming satisfying the local thermodynamic equilibrium (LTE), the one-dimensional thermal radiation transport equations  neglecting the scattering phenomenon are given as:
\begin{subequations}
\begin{align}
& \frac{1}{c}\frac{\partial I}{\partial t}+\mu\frac{\partial I}{\partial x}+\sigma I=\sigma B+\frac{1}{4\pi} Q,\label{1a}\\
& \frac{\partial U_m}{\partial t}=2\pi\int_0^{\infty}\int_{-1}^1 \sigma(I-B)\dd\mu\dd\nu,\label{1b}
\end{align}
\label{1}
\end{subequations}
the initial conditions are:
\begin{align*}
I(x,0,\mu,\nu)& =I^i(x,\mu,\nu),\\
T(x,0)& =T^i(x),
\end{align*}
and the boundary conditions are:
\begin{align*}
I(0,t,\mu,\nu)& =I^l(t,\mu,\nu),~~ 0\le \mu \le 1,\\
I(X,t,\mu,\nu)& =I^r(t,\mu,\nu),~~ -1\le \mu \le 0,
\end{align*}
here, $I(x,t,\mu,\nu)$ is radiation intensity,  $x$ is the spatial variable,  $t$ is the time variable, the variable $\mu$ represents the cosine of the angle between the photon's direction of motion and the positive direction of the $x$-axis, the variable $\nu$ represents the frequency of the photon. $U_m(x,t)=C_v(x,t)T(x,t)$ is the material
energy density, where $C_v(x,t), T(x,t)$ denote the specific heat coefficient and the material temperature, respectively. 
$\sigma(x,\nu,T(x,t))$ denotes the absorption opacity and $Q(x,t,\nu)$ represents an independent external source. $B(\nu,T(x,t))$ is the Planck function, which is concretized as follows:
\begin{equation}
      B(\nu, T) = \frac{2h\nu^{2}}{c^{3}} \frac{1}{e^{h\nu/\kappa T} - 1} = \frac{ac}{4\pi}T^{4}b(\nu, T),
      \label{def:B}
\end{equation}
where $h, \kappa$ denote the Planck constant, and Boltzmann constant, respectively. $b(\nu, T)$ is the probability density function, which satisfies:
\begin{equation}
     \int_{0}^{\infty}  b(\nu,T) \dd \nu = 1.
    \label{def:sigp}
\end{equation}

\subsection{The classical reference field method}
According to the literature \cite{WOLLABER200808,wang2019quantitative}, the Monte Carlo method suffers from the inefficiency problem of solving the thermal radiation transport equations since the total energy of photons emitted by thermal sources at each time step can become exceedingly large as the temperature increases in optically thick regions.
In classical Monte Carlo methods, we require a large number of Monte Carlo particles to represent these thermal sources. However, most of these particles will be absorbed by the nearby medium, leading to plenty of ineffective particle histories being tracked and severely impacting both computational efficiency and accuracy. Once given that in optically thick regions, the radiation intensity $I$ is often approximately equal to the blackbody radiation intensity. A natural idea is to define a difference field intensity $D(x,t,\mu,\nu)$ as follows:
\begin{align}
D(x,t,\mu,\nu)=I(x,t,\mu,\nu)- B(\nu, T(x,t)).\label{2}
\end{align}
By substituting equation \eqref{2} into equations \eqref{1}, we can derive the set of thermal radiation transport equations described under the classical reference field formulation:
\begin{subequations}
\begin{align}
& \frac{1}{c}\frac{\partial D}{\partial t}+\mu\frac{\partial D}{\partial x}=\sigma D-\frac{1}{c}\frac{\partial B}{\partial t}-\mu\frac{\partial B}{\partial x}+\frac{1}{4\pi} Q,\\
& \frac{\partial U_m}{\partial t}=2\pi\int_0^{\infty}\int_{-1}^1 \sigma D\dd\mu\dd\nu.
\end{align}
\end{subequations}

The initial reference field intensity is taken as $B(\nu, T(x,0))$, and the intensity of the boundary reference fields are taken as $B^l(\nu,t)$ and $B^r(\nu,t)$, respectively. Thus, the initial conditions are:
\begin{subequations}
\begin{align}
D(x,0,\mu,\nu)& =I^i(x,\mu,\nu)-B(\nu,T(x,0)),\\
T(x,0)& =T^i(x),
\end{align}
\label{3}
\end{subequations}
the boundary conditions are:
\begin{subequations}
\begin{align}
D(0,t,\mu,\nu)& =I^l(t,\mu,\nu)-B^l(\nu,t),~~ 0\le \mu \le 1,\\
D(X,t,\mu,\nu)& =I^r(t,\mu,\nu)-B^r(\nu,t),~~ -1\le \mu \le 0.
\end{align}
\label{4}
\end{subequations}
Compared to the standard formulation in equations \eqref{1}, the transport operator described in the difference formulation \eqref{3}, as known as the classical reference field formulation, remains unchanged. This means particles still travel at the speed of light, collisions and absorption 
still occur, and the opacity stays the same in the difference field. In the classical reference field, the thermal emission term disappears while additional source terms appear that involve temporal and spatial derivatives of the blackbody radiation intensity $B$ based on the current temperature. When these derivative terms are not much larger than the thermal emission term (when the problem is close to steady-state and the temperature gradients are not too large), the Monte Carlo method for thermal radiative transfer based on the classical reference field formulation can significantly reduce statistical errors. Suppose the problem is in a state of radiation equilibrium, i.e., when the radiation temperature equals the material temperature. In that case, we find that the statistical error of the Monte Carlo method for thermal radiative transfer based on the classical reference field is zero. At this point, the source emits zero particle energy at each time step. In contrast, the Monte Carlo method based on the classical formulation still exhibits statistical errors.

\subsection{The generalized reference field method}
Luu et al. \cite{luu2010generalized} found that although the thermal emission term vanishes, the spatial and temporal derivative terms related to the blackbody radiation intensity still resulted in relatively high statistical noise when a crucial difference existed between the radiation field intensity and the blackbody radiation field intensity. To address this, they proposed the generalized reference field method. The central idea of the generalized reference field method is no longer to restrict the reference field to being the blackbody radiation field at the current time. Instead, it can be taken as the blackbody radiation field at any arbitrary temperature.

Let $\widetilde{B}$ as a reference intensity field, with the corresponding reference field temperature defined as $\widetilde{T}$, the difference field intensity is given by $D = I - \widetilde{B}$, and the thermal radiation transport equations \eqref{3} are reformulated as:
\begin{subequations}
\begin{align}
& \frac{1}{c}\frac{\partial D}{\partial t}+\mu\frac{\partial D}{\partial x}+\sigma D=\sigma(B-\widetilde{B}) -\frac{1}{c}\frac{\partial  \widetilde{B}}{\partial t}-\mu\frac{\partial \widetilde{B}}{\partial x}+\frac{1}{4\pi}Q,
\\
& \frac{\partial U_m}{\partial t}=2\pi\int_0^{\infty}\int_{-1}^1 \sigma D\dd\mu\dd\nu-2\pi\int_0^{\infty}\int_{-1}^1 \sigma(B-\widetilde{B}) \dd\mu\dd\nu.
\end{align}
\label{5}
\end{subequations}
The initial and boundary conditions can be analogously specified as those for equations \eqref{3} and \eqref{4}. In this paper, we refer to equations \eqref{5} as the generalized reference field formulation of the thermal radiation transport equations.

Based on straightforward observation, Luu primarily investigated two specific types of reference fields:
\begin{align*}
\widetilde{B}_1(\nu, T(x,t))& =B(\nu, T(x,t)),\\
\widetilde{B}_2(\nu, T(x,t))& =B(\nu, T(x,t^n)),
\end{align*}
here $t\in(t^n,t^{n+1}]$. The first type of reference field is the same as the classic reference field, while the second type of reference field is taken as the blackbody radiation field at the beginning of each time step.

Although Luu introduced the concept of the generalized reference field method, he did not provide a selection strategy for the reference field. In regions with significant temperature gradients, neither of these reference fields can significantly reduce statistical noise, and there may even be instances where the statistical noise is greater than in the standard formulation \cite{2021The,cleveland2010extension}. Additionally, the reference field method introduces negative energy weights in computations, which causes numerical instability. For these reasons, the reference field method has received little attention for a long period.

\section{The global optimal reference field method}
Although the definition of the generalized reference field method is introduced above, a vital issue remains unresolved —— how to select the reference field? In our previous work \cite{2021The}, we answered this question by proposing the global optimal reference field method and transforming the original problem of determining the global optimal reference field into an optimization problem. In the following subsections, we will briefly review the IMC method for thermal radiation transport problems, propose the global optimal reference field concept, and introduce the connection between the global reference field and the target convex optimal problem. 

\subsection{The Implicit Monte Carlo method}
In this section, we derive the IMC method based on the generalized reference field formulation of the thermal radiation transport equation in \eqref{5}. 
In the Symbolic Implicit Monte Carlo(SIMC) method, the reference field can be unknown. The fully implicit scheme is applied to the material temperature, which is determined by the solution of a nonlinear equation system at the next time step \cite{BROOKS1989433}.
Although the IMC method is essentially a semi-implicit scheme, it is solved explicitly. If the reference field is unknown, the IMC method for thermal radiative transfer under the generalized reference field formulation is challenging to solve explicitly. In this work, We ignore this situation and only consider the case where the reference field is known at each time step.
We assume that at the time $t^{n}$, the reference field intensity is already given and is denoted as  $\widetilde{B}^n(\nu, \widetilde{T}^{n})$, which represents the blackbody radiation intensity for a specific reference temperature $\widetilde{T}^n$.

Generally, the radiation field is characterized by the radiation intensity, which includes the following variables: temporal variable, spacial variable, angular variable, and frequency variable. To simplify the problem, we assume the reference field is the blackbody radiation field and is relative to the material temperature on each grid cell. Based on this assumption, we can characterize the reference field using the material temperature or equilibrium radiation energy density on each grid cell, which greatly simplifies the problem. Furthermore, we assume that the reference field intensity is known at the beginning of each time step and does not change during each time step. For consistency with the context, we define the equilibrium radiation energy density corresponding to the reference field as $\widetilde{U}_r^n=a(\widetilde{T}^n)^4$.

Following the approach used in the IMC method for the standard formulation, we make the following approximations within a time step $[t^n,t^{n+1}]$:
\begin{subequations}
\begin{align}
& \sigma^n=\sigma(x,\nu,T(x, t^n))\approx \sigma(x,\nu,T(x, t)),\\
& b^n=b(\nu, T(x,t^n))\approx b(\nu, T(x,t)),\\
& \sigma_p^n=\int_0^{\infty}\sigma^nb^n\dd\nu\approx \int_0^{\infty}\sigma(x,\nu, T(x,t))b(\nu,T(x,t))\dd\nu,\\
& \beta^n=\beta(T(x,t^n))=\frac{\partial U_r}{\partial U_m}|_{x,t^n}\approx \beta(T(x, t)).
\end{align}
\label{6}
\end{subequations}
Here, $t\in (t^n,t^{n+1}]$. Plugging the equations \eqref{6} into equations \eqref{5}, we rewrite the equations \eqref{5} in the following form:
\begin{subequations}
\begin{align}
& \frac{1}{c}\frac{\partial D}{\partial t}+\mu\frac{\partial D}{\partial x}+\sigma^n D=\frac{c}{4\pi}\sigma^n(U_rb-\widetilde{U}_r^n\widetilde{b}^n)- \frac{1}{c}\frac{\partial  \widetilde{B}^n}{\partial t}-\mu\frac{\partial \widetilde{B}^n}{\partial x}+\frac{1}{4\pi}Q^n,\label{7a}\\
& \frac{\partial U_r}{\partial t}=\beta^n\left(
\int_0^{\infty}\int_{-1}^1 \sigma^n 2\pi D\dd\mu\dd\nu-\int_0^{\infty}\int_{-1}^1 \sigma^n2\pi(B-\widetilde{B}^n) \dd\mu\dd\nu
\right),\label{7b}
\end{align}
\label{7}
\end{subequations}
in which, we utilize the relationship in \eqref{def:B}:
\begin{align*}
B(\nu, T(x,t))=\frac{c}{4\pi}U_r(T(x,t))b(\nu, T(x,t)).
\end{align*}
From the definition in \eqref{def:B} and \eqref{def:sigp}, it follows:
\begin{equation}
\begin{aligned}
2\pi\int_0^{\infty}\int_{-1}^1 \sigma^n B \dd\mu\dd\nu=\sigma_p^ncU_r,
\end{aligned}
\label{8}
\end{equation}
Similar to \eqref{def:sigp}, we define the mean opacity at reference temperature $\widetilde{T}$ in the reference field, denoted as $\widetilde{\sigma}_p$:
\begin{equation}
\begin{aligned}
\widetilde{\sigma}_p^n=\int_0^{\infty}\sigma^nb(\nu, \widetilde{T}(x,t^n))\dd\nu.
\end{aligned}
\label{9}
\end{equation}
Substituting equations \eqref{8} and \eqref{9} into equation \eqref{7b}, we can obtain:
\begin{align}
\frac{\partial U_r}{\partial t}=\beta^n\left(
\int_0^{\infty}\int_{-1}^1 \sigma^n2\pi D\dd\mu\dd\nu-\sigma_pcU_r+\widetilde{\sigma}_p^nc\widetilde{U}_r^n
\right).
\label{10}
\end{align}
The integral average of a parameter over a time step is defined as:
\begin{align*}
\overline{(\cdot)}=\frac{1}{\Delta t^n}\int_{t^n}^{t^{n+1}}(\cdot)\dd t,
\end{align*}
Since the reference field is given at the beginning of each time step and the reference field does not change within a time step, the equation \eqref{10} over a time step yields:
\begin{align}
\frac{U_r^{n+1}-U_r^n}{\Delta t^n}=\beta^n\left(
\int_0^{\infty}\int_{-1}^1 \sigma^n2\pi \overline{D}\dd\mu\dd\nu-\sigma_pc\overline{U}_r+\widetilde{\sigma}_p^nc\widetilde{U}_r^n 
\right).\label{11}
\end{align}
With the same assumption in the standard IMC method, we approximate $\overline{U_r}$ as:
\begin{align}
\overline{U_r}	=\alpha  U_r^{n+1}+(1-\alpha ) U_r^{n}, \quad 0.5 \leq \alpha \leq 1.\label{12}
\end{align}
Combining equations \eqref{11} and \eqref{12}, we can obtain:
\begin{align}
\overline{U_r}=f^n U_r^n+\frac{1}{c\sigma_p^n}\int_0^{\infty}\int_{-1}^1(1-f^n)\sigma^n2\pi\overline{D}\dd \mu\dd \nu +(1-f^n)\frac{\widetilde{\sigma}_p^n}{\sigma_p^n}\widetilde{U}_r^n,\label{13}
\end{align}
where $f^n$ is the Fleck factor for the current time step, which is given as:
\begin{align*}
f^n=\frac{1}{1+\alpha c\beta^n\sigma_p^n\Delta t^n}.
\end{align*}
Finally, approximating the time-integrated averages in equation \eqref{13} with instantaneous values, i.e., $\overline{U}_r(x)\approx U_r(T(x,t))$ and $\overline{D}(x,\mu,\nu) \approx D(x,t,\mu,\nu)$, we obtain the analytical expression of $U_{r}(x,t)$:
\begin{align}
U_r(x,t)=f^n U_r^n+\frac{1}{c\sigma_p^n}\int_0^{\infty}\int_{-1}^1(1-f^n)\sigma^n2\pi D(x,t,\mu,\nu)\dd \mu\dd\nu +(1-f^n)\frac{\widetilde{\sigma}_p^n}{\sigma_p^n}\widetilde{U}_r^n, \quad t\in (t^{n}, t^{n+1}]. \label{14}
\end{align}
For the convenience of further deduction, we define the effective absorption opacity $\sigma_{ea}^{n}$ and the effective scattering opacity $\sigma_{es}^{n}$ at time $t = t^{n}$:
\begin{subequations}
\begin{align}
& \sigma_{ea}^n=f^n\sigma^n,\\
& \sigma_{es}^n=(1-f^n)\sigma^n.
\end{align}
\label{15}
\end{subequations}
Combining equations \eqref{14} and \eqref{15} and substituting them into equation \eqref{7a},  the linearized radiation transport equation is obtained as follows:

\begin{equation}\label{16}
   \begin{split}
       \frac{1}{c}\frac{\partial D}{\partial t}+\mu\frac{\partial D}{\partial x}+\sigma^n D=&\frac{c}{4\pi}\sigma_{ea}^nb^nU_r^n
    +\frac{c}{4\pi}\sigma_{es}^nb^n\frac{\widetilde{\sigma}_p^n}{\sigma_p^n}\widetilde{U}_r^n-\frac{c}{4\pi}\sigma^n\widetilde{b}^n\widetilde{U}_r^n\\
    & +\frac{\sigma^nb^n}{2\sigma_p^n}\int_{0}^{\infty}\int_{-1}^1\sigma_{es}^nD\dd \mu \dd\nu
    -\frac{1}{c}\frac{\partial  \widetilde{B}}{\partial t}-\mu\frac{\partial \widetilde{B}}{\partial x}+\frac{1}{4\pi}Q^n.
   \end{split} 
\end{equation}
The standard Monte Carlo method is applied for computation since the source terms of the right side in \eqref{16} are already known. Additionally, to obtain the internal energy density of the material at time $t^{n+1}$, we need to track the energy of particles absorbed by the material in each time step. Specifically, equation \eqref{16} can be viewed as a particle transport process involving absorption and scattering, with an absorption coefficient $\sigma_{ea}$. Therefore, the energy of particles absorbed by the material in each time step can be expressed as:
\begin{align}
\int_{t^n}^{t^{n+1}}\int_0^{\infty}\int_{-1}^1\sigma_{ea}2\pi D\dd\mu\dd\nu\dd t. \label{17}
\end{align}
In the Monte Carlo method, we obtain the statistical average of expression \eqref{17} by tracking the energy deposition at each time step. However, this process inevitably leads to statistical errors, which is an inherent limitation of the Monte Carlo method \cite{FLECK1971313,WOLLABER200808}.

With the help of the denotation defined above, the energy balance equation \eqref{7b} can be rewritten in the following equivalent form:
\begin{align}
\frac{\partial U_m}{\partial t}=\int_0^{\infty}\int_{-1}^1\sigma^n 2\pi D\dd\mu\dd\nu-\sigma_p^ncU_r+\int_0^{\infty}\int_{-1}^1\sigma^n2\pi \widetilde{B}\dd\mu\dd\nu.\label{18}
\end{align}
The equation \eqref{14} is introduced to approximate the instantaneous value $U_{r}$, which is then substituted into the energy balance equation \eqref{18} to obtain
\begin{align}
\frac{\partial U_m}{\partial t}=\int_0^{\infty}\int_{-1}^1\sigma_{ea}^n2\pi D\dd\mu\dd\nu-f^n\sigma_p^ncU_r^n+\int_0^{\infty}\int_{-1}^1\sigma_{ea}^n2\pi \widetilde{B}\dd\mu\dd\nu.\label{19}
\end{align}
Integrating equation \eqref{19} from $t^{n}$ to $t^{n+1}$, it follows:
\begin{equation}\label{20}
U_m^{n+1}-U_m^n =\int_{t^n}^{t^{n+1}}\int_0^{\infty}\int_{-1}^1\sigma_{ea}^n2\pi D\dd\mu\dd\nu\dd t
 -f^nc\sigma^n U_r^n\Delta t^{n}+\int_{t^n}^{t^{n+1}}\int_0^{\infty}\int_{-1}^1\sigma_{ea}^n2\pi \widetilde{B}\dd\mu\dd\nu\dd t.
\end{equation}
In above equation, the first term on the right-hand side is determined statistically, while the second and third terms can be directly determined from their values at $t^n$. Therefore, we can straightforward compute $U_m^{n+1}$ and consequently calculate $T^{n+1}$.

Compared to the standard formulation, the source term in the IMC equations \eqref{16} under the generalized reference field formulation is written as:
\begin{align}
\frac{c}{4\pi}\sigma_{ea}^nb^nU_r^n+\frac{c}{4\pi}\sigma_{es}^nb^n\frac{\widetilde{\sigma}_p^n}{\sigma_p^n}\widetilde{U}_r^n -\frac{c}{4\pi}\sigma^n\widetilde{b}^n\widetilde{U}_r^n+\frac{\sigma^nb^n}{2\sigma_p^n}\int_0^{\infty}\int_{-1}^1\sigma_{es}^nD\dd \mu \dd\nu
-\frac{1}{c}\frac{\partial  \widetilde{B}}{\partial t}-\mu\frac{\partial \widetilde{B}}{\partial x}+\frac{1}{4\pi}Q^n.\label{21}
\end{align}
Here, the first three terms represent the thermal emission terms based on the reference field. The temporal, spatial, angular, and frequency probability density functions for the emitted particles are the same due to gray approximation. As for the frequency-dependent problem,  the frequency probability density functions are not identical among particles.
Taking the $i$-th cell as an example, the first two terms can be written in the following form:
\begin{align}
\left[
\left(f^n\sigma_{p,i}^nU_{r,i}^n+(1-f^n)\sigma_{p,i}^n\widetilde{U}_{r,i}^n\right)c\Delta t^nV_i
\right]
\left(
\frac{1}{V_i}\frac{1}{\Delta t^n}\frac{1}{2}\frac{\sigma_i^nb_i^n}{\sigma_{p,i}^n}
\right).
\label{22}
\end{align}
Here, we use \( [ \cdot ] \) to denote the total energy emitted by the source in each time step on the $i$-th grid cell, and \( (\cdot) \) to denote the probability density function for the spatial, temporal, angular, and frequency of particles emitted from the source. Similarly, the third term can be written as:
\begin{align*}
\left[
-\widetilde{\sigma}_{p,i}^n\widetilde{U}_{r,i}^nc\Delta t^nV_i
\right]
\left(
\frac{1}{V_i}\frac{1}{\Delta t^n}\frac{1}{2}\frac{\sigma_i^n\widetilde{b}_i^n}{\widetilde{\sigma}_{p,i}^n}
\right).
\end{align*}
In practical applications, the material temperature associated with the reference field is often not significantly different from that at time \( t^n \). Consequently, we can approximate \( \widetilde{b}^n \) with \( b^n \) , which results in \( \widetilde{\sigma}_p^n \) being equivalent to \( \sigma_p^n \). Under these conditions, the first three terms can be simplified as follows:
\begin{align}
\left[
f^n\sigma_{p,i}^n(U_{r,i}^n-\widetilde{U}_{r,i})c\Delta t^nV_i
\right]
\left(
\frac{1}{V_i}\frac{1}{\Delta t^n}\frac{1}{2}\frac{\sigma_i^nb_i^n}{\sigma_{p,i}^n}
\right).
\label{23}
\end{align}
The fourth term corresponds to the scattering source, which does not emit particles. When the collision of particles is a scattering event, the energy weight and position of the particles do not change, but their angles and frequencies undergo alterations. From this perspective, it can be inferred that the angles of the particles are uniformly distributed.

The fifth term represents the temporal partial derivative term of the reference field, which can be written as:
\begin{align}
\left[
(\widetilde{U}_{r,i}^{n-1}- \widetilde{U}_{r,i}^{n})V_i
\right]
\left(
\frac{1}{V_i}\delta(t-t^n)\frac{1}{2}f(\nu,\widetilde{T}_i^{n-1}, \widetilde{T}_i^{n})
\right),\label{24}
\end{align}
here the function $f$ can be written as:
\begin{equation}
f(\nu, T_1, T_2)=\frac{T_1^4b(\nu, T_1)-T_2^4b(\nu, T_2)}{T_1^4-T_2^4}.
\label{25}
\end{equation}

The sixth term represents the spatial partial derivative term of the reference field. Similarly, adopting the idea of electron pairs and considering the positive \( \mu \) direction as an example, the spatial derivative term can be written in the following form:

\begin{align}
\left[
\frac{1}{4}c\Delta t^n\left( (\widetilde{T}_{l(j)}^{n})^4-(\widetilde{T}_{r(j)}^{n})^4\right)
\right]
\left(
\delta(x-x_j)\frac{1}{\Delta t^n}2\mu f(\nu, \widetilde{T}_{l(j)}^{n}, \widetilde{T}_{r(j)}^{n})
\right),\label{26}
\end{align}

Here, \( x_j \) represents a specific interface of the \( i \)-th cell, \( l(j) \) denotes the grid index to the left of the interface, and \( r(j) \) denotes the grid index to the right of the interface. For more details on the implicit Monte Carlo method in solving thermal radiative transport equation, we recommend readers refer to \cite{2021The}.

\subsection{The global optimal reference field}
In this part, we will further analyze the reason for statistical noise in the Monte Carlo method. Firstly, we present the standard and reference field formulation of the IMC method. Secondly, we reveal that the difference of the statistical noise in the two formulations comes from the Monte Carlo particle weight. Let the computational domain be \([0, X]\). In a single time step \((t^n, t^{n+1}]\), the one-dimensional IMC equations under the standard formulation can be written as:
\begin{subequations}
\begin{align}
& \frac{1}{c}\frac{\partial I}{\partial t} + \mu\frac{\partial I}{\partial x} + \sigma^n I = \frac{c}{4\pi}\sigma_{ea}^n b^n U_r^n + \frac{1}{2}\frac{\sigma^nb^n}{\sigma_p^n}\int_0^{\infty}\int_{-1}^1\sigma_{es}^n I \, d\mu \, d\nu + \frac{1}{4\pi}Q,\label{27a}\\
& \frac{\partial U_m}{\partial t} = \int_0^{\infty}\int_{-1}^1 \sigma_{ea}^n 2\pi I \, d\mu \, d\nu - f^n \sigma_p^n c U_r^n,\label{27b}
\end{align}
\label{27}
\end{subequations}
where the initial conditions are:
\begin{subequations}
\begin{gather*}
I(x, t^n, \mu, \nu) = I^{i}(x, \mu, \nu),\\
T(x, t^n) = T^{i}(x),
\end{gather*}
\end{subequations}
and the boundary conditions are:
\begin{subequations}
\begin{gather*}
I(0, t, \mu, \nu) = I^l(t, \mu, \nu), \quad 0 \le \mu \le 1,\\
I(X, t, \mu, \nu) = I^r(t, \mu, \nu), \quad -1 \le \mu \le 0.
\end{gather*}
\end{subequations}
Let \(\widetilde{B}^{n-1}\) denote the reference field in the time step \((t^{n-1}, t^{n}]\), and \(\widetilde{B}^n\) denote the reference field in the time step \((t^n, t^{n+1}]\). According to the derivation in the previous section, the one-dimensional IMC equations under the generalized reference field formulation can be written as:
\begin{subequations}
\begin{align}
\frac{1}{c}\frac{\partial D}{\partial t}+\mu\frac{\partial D}{\partial x}+\sigma^n D=\frac{c}{4\pi}\sigma_{ea}^nb^nU_r^n
+\frac{c}{4\pi}\sigma_{es}^nb^n\frac{\widetilde{\sigma}_p^n}{\sigma_p^n}\widetilde{U}_r^n-\frac{c}{4\pi}\sigma^n\widetilde{b}^n\widetilde{U}_r^n\notag\\
 +\frac{\sigma^nb^n}{2\sigma_p^n}\int_0^{\infty}\int_{-1}^1\sigma_{es}^nD\dd \mu\dd\nu
-\frac{1}{c}\frac{\partial  \widetilde{B}}{\partial t}-\mu\frac{\partial \widetilde{B}}{\partial x}+\frac{1}{4\pi}Q^n,\label{28a}\\
\frac{\partial U_m}{\partial t}=\int_0^{\infty}\int_{-1}^1\sigma_{ea}^n2\pi D\dd\mu\dd\nu-f^n\sigma_p^ncU_r^n+2\pi\int_0^{\infty}\int_{-1}^1\sigma_{ea}^n2\pi \widetilde{B}\dd\mu\dd\nu,
\label{28b}
\end{align}
\label{28}
\end{subequations}
where the initial conditions are:
\begin{gather*}
D(x, t^n, \mu, \nu) = I^{i}(x, \mu, \nu) - \widetilde{B}^{n-1}(x, \nu),\\
T(x, t^n, \mu, \nu) = T^{i}(x),
\end{gather*}
and the boundary conditions are:
\begin{gather*}
D(0, t, \mu, \nu) = I^l(t, \mu, \nu) - \widetilde{B}^n(x=0, \nu), \quad 0 \le \mu \le 1, \\
D(X, t, \mu, \nu) = I^r(t, \mu, \nu) - \widetilde{B}^n(x=X, \nu), \quad -1 \le \mu \le 0. 
\end{gather*}
An important characteristic of the IMC equations under the generalized reference field formulation is that the reference field may be discontinuous in time, which results from the requirement of a new reference field at the beginning of each time step. Additionally, it is evident that equation \eqref{27} and equation \eqref{28} are equivalent under these two formulations.

We now focus on the source of the statistical noise in the IMC method for the thermal radiation transport problem. In this method, we need to statistically track the energy deposition on spatial grids at each time step, which is the portion of the energy absorbed by the background material after particles collide. For the IMC method, the expression for this energy is:

\begin{equation*}
\int_{t^n}^{t^{n+1}}\int_0^{\infty}\int_{-1}^1\sigma_{ea}2\pi I \, d\mu \, d\nu \, dt.
\end{equation*}
For the IMC method under the generalized differential field formulation, the expression for this energy is:
\begin{equation*}
\int_{t^n}^{t^{n+1}}\int_0^{\infty}\int_{-1}^1\sigma_{ea}2\pi D \, d\mu \, d\nu \, dt.
\end{equation*}
In the following analysis, we ignore the differences in the spatial, temporal, angular, and frequency probability density functions of particles under the two formulations and only focus on the differences in the energy weights of the particles. We define the energy deposition density of the \(k\)-th Monte Carlo particle in a volume \(V\) grid cell during the time interval \((t^n, t^{n+1}]\) as \(g_{k}w_k/V\), where \(w_k\) is the energy weight of the \(k\)-th particle, and \(g_{k}\) is a random number that is either 0 or 1, representing the probability that the \(k\)-th Monte Carlo particle is absorbed in the grid cell. When \(g_k = 0\), the particle is not absorbed; when \(g_k = 1\), the particle is absorbed by the grid.

Assume the  $g_{k}$ follows a probability density distribution function $G_{k}$. The transport operators are consistent under both formulations, meaning that particles travel at the speed of light and the opacity of the background material is the same. Since we have ignored the differences in the spatial, temporal, angular, and frequency probability density functions of the particles at birth, this implies that the probability density distribution function $G_{k}$ is consistent under both formulations, even though we cannot directly obtain the expression for \(G_{k}\).

To distinguish between the two formulations, we use the subscript \(sf\) to denote quantities under the standard formulation and the subscript \(gf\) to denote quantities under the generalized reference field formulation. We assume that both formulations of the IMC method use $N_{p}$ particles. For the IMC method in the standard formulation, the change in material energy density in a time step \((t^n, t^{n+1}]\) is:
\begin{equation}\label{29}
U_{m,sf}^{n+1} - U_{m,sf}^{n} = \frac{1}{V}\sum_{k=1}^{N_p} w_{k,sf} g_{k,sf} - cf^n \sigma_{p,sf}^n U_{r,sf}^n \Delta t^n.
\end{equation}
Similarly, for the IMC method in the generalized reference field formulation, the change in material energy density in single time step \((t^n, t^{n+1}]\) is:
\begin{equation}\label{30}
U_{m,gf}^{n+1} - U_{m,gf}^{n} = \frac{1}{V}\sum_{k=1}^{N_p} w_{k,gf} g_{k,gf} - cf^n \sigma_{p,gf}^n U_{r,gf}^n \Delta t^n + f^n c \widetilde{\sigma}_{p,gf}^n \widetilde{U}_r^n \Delta t^n.
\end{equation}
Here, \(w_{k,sf}\) and \(w_{k,gf}\) represent the energy weights of the IMC particles under the standard and generalized reference field formulations, respectively. \(g_{k,sf}\) and \(g_{k,gf}\) are random numbers that follow the same probability distribution function \(G_k\). Additionally, from equation sets \eqref{27} and \eqref{28}, we know that:

\begin{align*}
U_{m,sf}^{n} & = U_{m,gf}^{n},\\
\sigma_{p,sf}^n & = \sigma_{p,gf}^n,\\
U_{r,sf}^{n} & = U_{r,gf}^{n}.
\end{align*}
Observing equations \eqref{29} and \eqref{30}, we find that the statistical error originates from the first term on the right-hand side of each equation, since the other terms are deterministic. Therefore, we have:

\begin{subequations}\label{work1:eq23}
\begin{align}
\mathrm{Var}(U_{m,sf}^{n+1}) &= \frac{1}{V}\mathrm{Var}\left(\sum_{k=1}^{N_p}w_{k,sf}g_{k,sf}\right),\\
\mathrm{Var}(U_{m,gf}^{n+1}) &= \frac{1}{V}\mathrm{Var}\left(\sum_{k=1}^{N_p}w_{k,gf}g_{k,gf}\right),
\end{align}
\end{subequations}
where ``\(\mathrm{Var}\)" denotes the variance. Since the Monte Carlo particles are mutually independent, $\left\{g_{k, gf}\right\}_{k=1}^{N_{p}}$ and $\left\{ g_{k, sf}\right\}_{k=1}^{N_{p}}$ are also mutually independent as well. Additionally, the absolute values of the energy weights of all Monte Carlo particles are generally taken to be consistent, which follows:
\begin{subequations}\label{work1:eq24}
\begin{align}
w_{k,sf} &= \frac{1}{N_p}E_{tot,sf}^n, \quad k=1,2,\ldots,N_p,\\
w_{k, gf} &= \frac{1}{N_p}E_{tot,gf}^n, \quad k=1,2,\ldots,N_p.
\end{align}
\end{subequations}
Here, \(E_{tot,sf}^n\) and \(E_{tot,gf}^n\) represent the total energy of all Monte Carlo particles emitted from the sources in a time step \((t^n, t^{n+1}]\) under the standard and generalized reference field formulations, respectively. For the standard formulation, the energy weights of the Monte Carlo particles are all positive. 
However, some ``energies" emitted from specific sources might be negative for the generalized reference field formulation. At the start of the emission of Monte Carlo particles, these negative energies cannot cancel out the positive energies. Therefore, once a source emits ``negative energy" $(E'\le 0)$, the number of particles from this source is $\frac{|E'|}{w_{gf}}$ and the energy weight of each Monte Carlo particle is $-w_{gf}$. 

With the assumption of constant energy weights of particles, we derive the following equations from the \eqref{work1:eq23} and \eqref{work1:eq24}:
\begin{subequations}\label{work1:eq25}
\begin{align}
\mathrm{Var}(U_{m,sf}^{n+1}) &= \frac{1}{V}\frac{1}{N_p^2}(E_{tot,sf}^n)^2\sum_{k=1}^{N_p}\mathrm{Var}(g_{k,sf}),\\
\mathrm{Var}(U_{m,gf}^{n+1}) &= \frac{1}{V}\frac{1}{N_p^2}(E_{tot,gf}^n)^2\sum_{k=1}^{N_p}\mathrm{Var}(g_{k,gf}).
\end{align}
\end{subequations}
Since \(g_{k,sf}\) and \(g_{k,gf}\) follow the same probability density distribution function \(G_k\), we have:
\begin{equation}
\mathrm{Var}(g_{k,sf}) = \mathrm{Var}(g_{k,gf}).\label{work1:eq26}
\end{equation}
From equations \eqref{work1:eq25} and \eqref{work1:eq26}, we can infer that if \(\mathrm{Var}(U_{m,gf}^{n+1}) < \mathrm{Var}(U_{m,sf}^{n+1})\) is attained, it is required the following condition must be true:
\begin{align*}
E_{tot,gf}^n < E_{tot,sf}^n.
\end{align*}
Furthermore, the lower the value of \(E_{tot,gf}^n\), the smaller the statistical error will be in the IMC method that employs the generalized reference field formulation. It should be noted that this conclusion was reached without considering the discrepancies in the spatial, temporal, angular, and frequency probability density functions of the particles at the time of emission between the two formulations.

As mentioned above, statistical error is associated with the total energy of all Monte Carlo particles. To obtain a lower statistical error, it is imperative to ensure a lower total energy. We then present the total energy of standard and generalized reference field formulations in detail. For the classical IMC method, we have:

\begin{equation}
E_{tot,sf}^n = E_{B,sf}^n + E_{Q,sf}^n + E_{IC,sf}^n + E_{R,sf}^n,
\label{work1:eq27}
\end{equation}
where \(E_{B,sf}^n\), \(E_{Q,sf}^n\), \(E_{IC,sf}^n\), and \(E_{R,sf}^n\) represent the energies emitted by the boundary source, external independent source, residual source from the previous time step, and thermal emission source in a time step, respectively. The analytical expressions of those terms are given by:

\begin{subequations}\label{work1:eq28}
\begin{align}
&E_{B,sf}^n = \int_0^{\infty}\int_0^1\int_{t^n}^{t^{n+1}}\mu 2\pi I^{l}(t,\nu,\mu) \, dt \, d\mu \, d\nu + \int_0^{\infty}\int_{-1}^0\int_{t^n}^{t^{n+1}}|\mu| 2\pi I^{r}(t,\nu,\mu) \, dt \, d\mu \, d\nu,\\
&E_{Q,sf}^n = \int_0^{\infty}\int_{-1}^1\int_{t^n}^{t^{n+1}}\int_0^X \frac{1}{2}Q(x,t,\nu) \, dx \, dt \, d\mu \, d\nu,\\
&E_{IC,sf}^n= \int_0^{\infty}\int_{-1}^1\int_0^X 2\pi I^{i}(x,\mu,\nu) \, dx \, d\mu \, d\nu,\\
&E_{R,sf}^n= \int_0^{\infty}\int_{t^n}^{t^{n+1}}\int_0^X \int_{-1}^1 \frac{1}{2}\sigma_{ea}^n b^n c U_r^n \, dx \, dt \, d\nu = c\Delta t^n \int_0^X f^n \sigma_p^n U_r^n \, dx.
\end{align}
\end{subequations}
For the IMC equations under the generalized reference field formulation, according to equation \eqref{work1:eq27}, we can write the total energy emitted by the sources within a time step in the following expression:

\begin{equation}
E_{tot, gf}^n = E_{B,gf}^n + E_{Q,gf}^n + E_{R,gf}^n + E_{IC,gf}^n + E_{dt,gf}^n + E_{dxl, gf}^n + E_{dxr, gf}^n.
\label{work1:eq29}
\end{equation}
We will analyze each term on the right-hand side of the equation \eqref{work1:eq29} item by item.

$\bullet \textbf{ The boundary source term}$ $E_{B, gf}^{n}$

The first quantity \(E_{B,gf}^n\), represents the energy emitted by a boundary source within a single time step. In this work, we introduce several specific types of boundary conditions. First, a Planck source at constant temperature on the boundary, which is common in numerical experiments. In this case, the reference field temperature of the grid cell just outside the boundary can be set to the Planck source's temperature. The boundary source can then be considered as a spatial derivative source at the boundary. 

Using the concept of electron pairs but focusing only on particles entering the computational domain, we take the left boundary as an example and describe the energy emitted from this source within a time step. Suppose the temperature of the Planck source is \(\widetilde{T}_0\), which we define as the reference field temperature for the grid cell just outside the left boundary, while the reference field temperature of the first grid cell inside the boundary is \(\widetilde{T}_1\). Thus, the energy flowing into the left boundary within a single time step is given by:
\begin{equation}
E_{Bl, gf}=\lim_{\epsilon_x\rightarrow 0}\int_{-\epsilon_x}^{\epsilon_x}\int_{t^n}^{t^{n+1}}\int_0^1\int_0^{\infty}  -2\pi \mu\frac{\partial \widetilde{B}}{\partial x} \dd\nu\dd\mu\dd t\dd x
=\frac{1}{4}ac\Delta t^n\left(
\widetilde{T}_0^4-\widetilde{T}_1^4
\right).
\label{work1:eq30}
\end{equation}
Here, \(\epsilon_x > 0\) is any arbitrarily small positive number. The result of expression\eqref{work1:eq30} might be negative, meaning that the energy weight of the Monte Carlo particles entering the boundary is negative. However, when calculating the total energy emitted by all sources within one time step, this part of the energy is taken as its absolute value and included in the total energy. Similarly, the total energy flowing into the right boundary can be defined as \(E_{Br, gf}\) within a time step, which yields the total boundary energy:
\begin{align*}
E_{B,gf}^n=\vert E_{Bl, gf}\vert +\vert E_{Br, gf}\vert.
\end{align*}
Secondly, suppose the left boundary has a reflective boundary condition, meaning there is no net radiative flux at the boundary. In that case, the reference field temperature of the grid cell just outside the left boundary can be set equal to the adjacent grid cell's reference field temperature inside the boundary. In this case, we obtain that  
\begin{align*}
E_{Bl,gf}^n=0.
\end{align*}
Thirdly, if the boundary source is not a Planck source, the reference field temperature of the grid cell just outside the left boundary is typically set to zero, where \(D^l = I^l\). Therefore, the energy flowing into the left boundary within a time step is given by:
\begin{align*}
E_{Bl, gf}^n=\int_{t^n}^{t^{n+1}}\int_0^1\int_0^{\infty}2\pi I^l(t,\mu,\nu)\dd \nu\dd \mu\dd t.
\end{align*}
Fourthly, when the left boundary is a vacuum boundary condition, it can be treated as a Planck source with zero temperature. In this case, the calculation process is similar to the first case.

$\bullet \textbf{ The independent external source term}$ $E_{Q, gf}^{n}$

The second quantity $E_{Q,gf}^n$, represents the energy emitted by an independent external source within a single time step. From equation \eqref{7a}, we have:
\begin{equation}
E_{Q,gf}^n=E_{Q,sf}^n=\int_0^{\infty}\int_{-1}^1\int_{t^n}^{t^{n+1}}\int_0^X\frac{1}{2}Q(x,t,\nu)\dd x\dd t\dd\mu\dd\nu.
\end{equation}

$\bullet \textbf{ The thermal emission term}$ $E_{R, gf}^{n}$

The third quantity \(E_{R,gf}^n\) represents the energy emitted by thermal emission within a single time step, which is expressed as:
\begin{equation}
E_{R,gf}^n=
2\pi\int_{t^n}^{t^{n+1}}\int_{-1}^1\int_0^{\infty}\left\vert
\frac{c}{4\pi}\sigma_{ea}^nb^nU_r^n+\frac{c}{4\pi}\sigma_{es}^nb^n\frac{\widetilde{\sigma}_p^n}{\sigma_p^n}\widetilde{U}_r^n -\frac{c}{4\pi}\sigma^n\widetilde{b}^n\widetilde{U}_r^n
\right\vert
\dd\nu\dd\mu\dd x.
\label{work1:eq31}
\end{equation}
With the same assumption of minor difference between material temperature within a single time step, we also apply the approximation $\widetilde{b}^n=b^n$, which simplifies the formulation of $E_{R,gf}$ as:
\begin{equation}\label{work1:eq31b}
E_{R,gf}^n =
\int_{t^n}^{t^{n+1}}\int_{-1}^1\int_0^{\infty}\frac{c}{2}\sigma_{ea}^nb^n\left\vert U_r^n-\widetilde{U}_r^n\right\vert\dd\nu\dd\mu\dd x.
\end{equation}

$\bullet \textbf{ The residual source term}$ $E_{IC, gf}^{n}$

The fourth term $E_{IC,gf}^n$ represents the energy of the residual source from the previous step. The state of all particles in the residual source is known, thus its energy can be calculated as:
\begin{equation}
E_{IC,gf}^n=\sum_{l=1}^{N_{IC,gf}}\vert w_{l}'\vert,
\label{work1:eq32}
\end{equation}
where $N_{IC,gf}$ is the total number of particles in the residual source, and $w_{l}'$ is the energy weight of the $l$-th particle in the residual source.

$\bullet \textbf{ The temporal derivative source term}$ $E_{dt,gf}^n$

The fifth term $E_{dt,gf}^n$ denotes the energy emitted by the time derivative source within a time step. Since the reference field is given at the beginning of each time step and remains constant throughout, the energy emitted by this source can be expressed as:
\begin{equation}\label{work1:eq33}
\begin{aligned}
E_{dt,gf}^n&=\lim_{\epsilon_t\rightarrow 0}\int_0^X\int_{t^n-\epsilon_t}^{t^n+\epsilon_t}\int_{-1}^1\int_0^{\infty}\left\vert-\frac{2\pi}{c}\frac{\partial \widetilde{B}}{\partial t}\right\vert\dd\nu\dd\mu\dd t\dd x\\
&=\int_0^X\left\vert 
\widetilde{U}_r^n-\widetilde{U}_r^{n-1}
\right\vert\dd x.
\end{aligned}
\end{equation}

$\bullet \textbf{ The spatial derivative source term}$ $E_{dxr, gf}^{n}, E_{dxl,gf}^{n}$

The last two terms, $E_{dxr, gf }^n$ and $E_{dxl, gf }^n$, are the energy emitted by the spatial derivative source in the positive $\mu$ direction and the negative $\mu$ direction for each time step, respectively. In this work, the reference field is approximated by piece-wise constants in space, meaning it remains constant within each grid cell. Thus, the particles emitted by this source are located at the grid interfaces. For a particular interface $x_j$, the energy emitted by the spatial derivative source in the positive $\mu$ direction for each time step is:

\begin{equation}
E_{dxr, gf,j+ }^n=\lim_{\epsilon_x\rightarrow 0}\int_{x_j-\epsilon_x}^{x_j+\epsilon_x}
\int_{t^n}^{t^{n+1}}\int_{0}^1\int_0^{\infty}-
2\pi\mu\frac{\partial \widetilde{B}^n}{\partial x}\dd \nu\dd\mu\dd x=\frac{1}{4}ac((\widetilde{T}_{l(j)}^n)^4-(\widetilde{T}_{r(j)}^n)^4)\Delta t^n.\label{work1:eq34}
\end{equation}

$E_{dxr, gf,j }^n$ may take negative values, however, we take the absolute value when calculating the total energy. Since the particles emitted by the spatial derivative source are all located at the grid interfaces, we have:
\begin{equation}
E_{dxr, gf }^n=\sum_{j}\vert E_{dxr, gf, j }^n\vert.\label{work1:eq35}
\end{equation}
According to the concept of electron pairs, it is easy to see that:
\begin{equation*}
E_{dxl, gf }^n=E_{dxr, gf }^n.
\end{equation*}

\subsection{The optimization problem}
In the preceding sections, we have established that the statistical noise is proportional to the total energy emitted from the source term as indicated in equation \eqref{work1:eq25}. A reduction in the total energy results in decreased statistical noise at each time step. Moreover, with the reference field defined at the commencement of each time step, we have dissected and elucidated the formulation of each component of these source terms. In the standard formulation, the energy emitted by the sources is predetermined once the initial and boundary conditions are specified. However, in the context of the reference field description, the emitted energy is contingent not only on the boundary and initial conditions but also on the selected reference field for the current time step.

In this subsection, we will reformulate the expression for the total energy as presented in equation \eqref{work1:eq29} and recast it into the framework of a standard optimization problem. Let $H$ denote the function associated with the generalized reference field $\widetilde{B}^{n}$, which is defined as follows:
\begin{equation}
H(\widetilde{B}^n)=E_{B,gf}^n+E_{Q,gf}^n+E_{R,gf}^n + E_{IC,gf}^n+E_{dt,gf}^n+E_{dxl, gf }^n+E_{dxr, gf}^n.\label{work1:eq36}
\end{equation}
It is obvious that \(H(\widetilde{B}^n) > 0\). Our goal is to find a reference field \(\widetilde{B}^n\) such that \(H(\widetilde{B}^n)\) takes on its minimum value.

It is a straightforward matter to demonstrate that when \( \widetilde{B} \) assumes finite values, \( H(\widetilde{B}^n) \) remains finite, and as \( \widetilde{B}^n \) approaches infinity, \( H(\widetilde{B}^n) \) also tends towards infinity. Consequently, there must be a reference field that minimizes \( H \), which we define as the "global optimal reference field" and denote by \( \ddot{B} \). It is clear that the quest for the global optimal reference field essentially boils down to an optimization problem, with the solution being the aforementioned global optimal reference field.


To simplify the problem, we assume the reference field is based on a blackbody radiation field related to a specific temperature, neglecting the frequency dependence. This reference field can be characterized by the temperature \(\widetilde{T}^n\) or the equilibrium radiation energy density \(\widetilde{U}_r^n = a(\widetilde{T}^n)^4\). Under this assumption, we employ a piecewise constant approximation in space, hypothesizing that each grid cell is homogeneous with a uniform cell size of $\Delta x$.

To further analyze Equation \eqref{work1:eq36}, we rewrite the left-hand side as follows:
\begin{equation}
H(\widetilde{B}^n)=H(\widetilde{U}_{r1}^n,\widetilde{U}_{r2}^n,\cdots,\widetilde{U}_{rN}^n)\label{work1:eq37}
\end{equation}
where the subscripts $1, 2, \ldots, N$ represent the grid indices. Then we calculate the right-hand side of equations \eqref{work1:eq36} term by term.


The first term, $E_{B,gf}^n$, represents the boundary source, which generally does not depend on the temperature in the spatial grids and thus can be considered a constant. The second term on the right-hand side, $E_{Q,gf}^n$, represents the external source term, which also does not depend on the temperature and is therefore regarded as a constant. The third term represents thermal emission, which is expressed in equation \eqref{work1:eq31b}. Under the assumption of a piecewise constant temperature approximation in space, this term is calculated as:

\begin{equation}
E_{R,gf}^n=c\Delta t\Delta x\sum_{i=1}^N f_i^n\sigma_i^n\left| U_{ri}^n-\widetilde{U}_{ri}^n\right|\label{work1:eq38}
\end{equation}
Here, $f_i^n$ represents the Fleck factor at the $i$-th grid at time $t^n$.

The fourth term on the right-hand side represents the residual source from the previous time step, which is independent of temperature and is treated as a constant. The fifth term on the right-hand side is the time derivative source, which under the assumption of piecewise constant approximation of temperature with respect to the spatial grid, can be written in the following form:
\begin{equation}
E_{dt,gf}^n = \Delta x \sum_{i=1}^N \left| \widetilde{U}_{ri}^n - \widetilde{U}_{ri}^{n-1} \right|
\label{work1:eq39}
\end{equation}
The sixth and seventh terms on the right-hand side are the spatial derivative sources, whose expressions is given by:
\begin{equation*}
E_{dxl, gf}^n = E_{dxr, gf}^n = \sum_{j} \left| \frac{1}{4} ac \left( (\widetilde{T}_{l(j)}^n)^4 - (\widetilde{T}_{r(j)}^n)^4 \right) \Delta t^n \right|.
\end{equation*}
Here, the subscript $l(j)$ denotes the value at the left end of the $j$-th interface, and the subscript $r(j)$ denotes the value at the right end of the $j$-th interface. Given the piecewise constant approximation of temperature, the above equation can be rewritten as:
\begin{equation}
E_{dxl, gf}^n = E_{dxr, gf}^n = \sum_{i=1}^{N+1} \left| \frac{1}{4} c (\widetilde{U}_{ri}^n - \widetilde{U}_{r(i-1)}^n) \Delta t^n \right|.
\label{work1:eq40}
\end{equation}
By combining equations \eqref{work1:eq38}, \eqref{work1:eq39}, and \eqref{work1:eq40}, we find that solving the global optimal reference field problem is transformed into solving the following equivalent unconstrained optimization problem:
\begin{equation}\label{work1:eq41}
\begin{aligned}
\displaystyle
\min \quad &f(\mathbf{z}) = \sum_{i = 1}^{N} \left(\alpha_{i}|z_{i} - \bar{\alpha}_{i}| + \beta_{i}|z_{i} - \bar{\beta}_{i}|\right) + \sum_{i=1}^{N+1}\gamma_{i} |z_{i} - z_{i-1}|,\\
\text{s.t.} \quad &\mathbf{z}= (z_1,z_2,\ldots,z_N)^T \in \mathbb{R}^N,
\end{aligned}
\end{equation}
Here, $z_{0}, z_{N+1}$ are the known constant determined by the boundary condition in the previous time step. The expressions for the other parameters are as follows:

$$
\begin{cases}
& \alpha_i = c\Delta t\Delta x f_i^n \sigma_i^n,\\
& \bar{\alpha}_i = U_{ri}^n,\\
& \beta_i = \Delta x,\\
& \bar{\beta}_i = \widetilde{U}_{ri}^{n-1},\\
& \gamma_i = \frac{1}{2}c\Delta t^n.
\end{cases}
$$

\section{Numerical method for solving the global optimal reference field}
In our previous work \cite{2021The}, we did not address the optimization problem directly. Instead, we proposed a physically sound and computationally economical solution that reduces statistical noise. However, while effective, this approach does not necessarily constitute the optimal solution to the aforementioned optimization problem. In this work, we will first briefly review the previous strategies for the target optimization problem \eqref{work1:eq41} and then will demonstrate that the optimization problem can be equivalently transformed into a linear programming problem with equality constraints and then solved efficiently by graph-neural-network-based solver MindOpt \cite{unknown2023}.

\subsection{The properties of the global optimal reference field}\label{property}
We first present some properties of the optimization problem in $\eqref{work1:eq41}$, which is essential to analyze the global optimal reference field.  
\begin{corollary}
\(f(\mathbf{z})\) is a non-negative convex function with respect to \(\mathbf{z}\), the optimal solution to the problem \eqref{work1:eq41} must exist.
\end{corollary}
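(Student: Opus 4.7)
\medskip

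\noindent\textbf{Proof proposal.} The plan is to verify the three claims (non-negativity, convexity, and existence of a minimizer) in sequence, exploiting the structure of $f$ as a positive-coefficient combination of absolute-value terms of affine functions in $\mathbf{z}$.

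First, I would establish non-negativity directly from the definitions of the coefficients immediately below \eqref{work1:eq41}. Because $c,\Delta t,\Delta x,f_i^n,\sigma_i^n$ are all positive physical quantities, every $\alpha_i,\beta_i,\gamma_i$ is strictly positive, and every $|\cdot|$ term in $f$ is of course non-negative. Hence $f(\mathbf{z})\ge 0$ for all $\mathbf{z}\in\mathbb{R}^N$, with equality impossible in general since the targets $\bar\alpha_i,\bar\beta_i$ and $z_0,z_{N+1}$ cannot all be simultaneously matched.

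Next, I would prove convexity. Each map $\mathbf{z}\mapsto z_i-\bar\alpha_i$, $\mathbf{z}\mapsto z_i-\bar\beta_i$ and $\mathbf{z}\mapsto z_i-z_{i-1}$ is affine (recalling that $z_0$ and $z_{N+1}$ are \emph{fixed constants} from the prior time step, not optimization variables). Composing an affine function with the convex scalar function $t\mapsto|t|$ yields a convex function on $\mathbb{R}^N$. Since a non-negative linear combination of convex functions is convex, $f$ is convex as a finite sum of such terms with strictly positive weights $\alpha_i,\beta_i,\gamma_i$.

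Finally, I would prove existence of a minimizer by coercivity plus continuity rather than by any explicit construction. The key step is to show $f(\mathbf{z})\to\infty$ whenever $\|\mathbf{z}\|\to\infty$: if $\|\mathbf{z}^{(k)}\|\to\infty$, then along a subsequence some coordinate $|z_i^{(k)}|\to\infty$, and the reverse triangle inequality gives
\begin{equation*}
f(\mathbf{z}^{(k)})\;\ge\;\alpha_i\,|z_i^{(k)}-\bar\alpha_i|\;\ge\;\alpha_i\bigl(|z_i^{(k)}|-|\bar\alpha_i|\bigr)\;\longrightarrow\;\infty,
\end{equation*}
because $\alpha_i>0$ and all remaining terms are non-negative. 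Consequently the sublevel set $\{\mathbf{z}:f(\mathbf{z})\le f(\mathbf{0})\}$ is closed (by continuity of $f$) and bounded (by coercivity), hence compact; the continuous function $f$ attains its infimum on this compact set, which is necessarily the global minimum over $\mathbb{R}^N$.

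The only subtle point is the coercivity step. One must be careful that $z_0$ and $z_{N+1}$ are boundary data and not variables, so that the absolute-value terms $\alpha_i|z_i-\bar\alpha_i|$ (with $\bar\alpha_i$ a fixed number $U_{r,i}^n$) really do drive $f$ to infinity coordinate by coordinate; the $\gamma_i$ neighbor-difference terms alone would not, since they are invariant under the translation $\mathbf{z}\mapsto\mathbf{z}+c\mathbf{1}$. The $\alpha_i$ terms break this translation invariance and are what make the problem coercive, which I would emphasize explicitly.
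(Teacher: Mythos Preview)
Your proposal is correct. The paper does not actually give a proof of this corollary at all; it simply states ``It is easy to obtain this deduction'' and moves on. Earlier, when discussing the equivalent functional $H(\widetilde{B}^n)$, the paper does sketch the same coercivity idea you use (finite values give finite $H$, and $H\to\infty$ as $\widetilde{B}^n\to\infty$, hence a minimizer exists), so your approach is exactly in the spirit of what the authors intend but left implicit. Your write-up supplies the details the paper omits, including the useful observation that the $\alpha_i$ anchor terms are what break the translation invariance of the $\gamma_i$ terms and make coercivity work.
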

It is easy to obtain this deduction. Furthermore, it is essential to point out the fact that the optimal solution $\mathbf{Z}^{\star} \in \left\{\bar{\alpha}_1,\cdots, \bar{\alpha}_N,  \bar{\beta}_1,\cdots, \bar{\beta}_n, z_0, z_{N+1}\right\}$, which is followed by the following theorem. 

\begin{theorem}\label{theorem}
    Consider the following unconstrained optimization problem:
    \begin{equation}\label{theorem:eq}
        \begin{aligned}
        \min \quad & f(z) = d_1' |z - d_1| + \cdots + d_n' |z - d_n|\\
        \text{s.t.} \quad & d_{i} \ge 0, \quad i \in \{1, 2, \cdots, n\},
        \end{aligned}
    \end{equation}
    where $d_{i}'$, $i \in \{1,2,\cdots,n\}$, are all positive coefficients. If the optimal solution $d^{\star}$ of \eqref{theorem:eq} exists and is unique, then it must satisfy the following condition:
    \begin{equation*}
        d^{\star} \in \{d_{1}, d_{2}, \cdots, d_{n}\}.
    \end{equation*}
\end{theorem}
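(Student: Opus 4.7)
My plan is to exploit the fact that the objective in \eqref{theorem:eq} is a positively weighted sum of absolute value functions, which makes $f$ convex, continuous, and piecewise linear with breakpoints exactly at $\{d_1,\dots,d_n\}$. Since the weights $d_i'$ are strictly positive, $f(z)\to\infty$ as $|z|\to\infty$, so the set of minimizers is a nonempty closed bounded interval (by convexity of $f$ on $\mathbb{R}$). The whole game is then to show that under the uniqueness hypothesis this interval degenerates to a single point which is already a breakpoint.

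First I would relabel the data by sorting, writing $d_{(1)}\le d_{(2)}\le \cdots\le d_{(n)}$, and on each of the open intervals $(-\infty,d_{(1)})$, $(d_{(k)},d_{(k+1)})$ for $k=1,\dots,n-1$, and $(d_{(n)},\infty)$, I would observe that $f$ is differentiable with the slope
\[
s(z)=\sum_{d_i<z}d_i'-\sum_{d_i>z}d_i'.
\]
Since all $d_i'>0$, $s$ is nondecreasing in $z$, strictly negative on $(-\infty,d_{(1)})$ and strictly positive on $(d_{(n)},\infty)$. This confines any minimizer to the compact interval $[d_{(1)},d_{(n)}]$.

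The key step is the argument by contradiction: suppose the unique optimum $d^{\star}$ does \emph{not} belong to $\{d_1,\dots,d_n\}$. Then there is a pair of consecutive distinct sorted breakpoints with $d^{\star}\in(d_{(k)},d_{(k+1)})$, and on this open interval $f$ is affine. Optimality of $d^{\star}$ in the interior of an affine piece forces the slope on that piece to vanish, so $f$ is constant throughout $(d_{(k)},d_{(k+1)})$, and by continuity on the whole closed interval $[d_{(k)},d_{(k+1)}]$. But then every point of this closed interval is a minimizer, contradicting uniqueness of $d^{\star}$. Hence $d^{\star}\in\{d_1,\dots,d_n\}$, which is the claim.

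I do not anticipate a serious obstacle: the only mild care needed is bookkeeping when some $d_i$ coincide (so the sorted list has repeats), but this only collapses the analysis onto fewer distinct breakpoints and the slope/monotonicity argument carries through verbatim on the distinct values. The argument is essentially the standard weighted-median characterization for $\ell_1$ objectives, adapted to an unconstrained single-variable problem, and it is exactly this observation (coupled with uniqueness) that will later justify restricting the search for the global optimal reference field to the finite candidate set appearing after Corollary~1.
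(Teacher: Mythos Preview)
Your argument is correct: $f$ is convex, coercive, and affine on each open interval between consecutive (distinct) breakpoints, so an interior minimizer forces zero slope on that piece, hence a whole segment of minimizers, contradicting uniqueness. The only cosmetic point is that the hypothesis $d_i\ge 0$ plays no role in your proof (nor should it), and the repeated-$d_i$ case is indeed harmless since you only need the distinct breakpoint values.

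As for comparison with the paper: the paper does \emph{not} prove Theorem~\ref{theorem} at all. Immediately after the statement it says ``We do not provide a proof for this theorem within this context'' and simply cites Boyd and Vandenberghe. So your proposal supplies strictly more than the paper does here---a short, self-contained weighted-median/piecewise-linear argument---whereas the paper outsources the result to a textbook. Your version is exactly what is needed downstream in the proof of Corollary~2, where the paper repeatedly invokes Theorem~\ref{theorem} to force each coordinate $z_i^\star$ of the optimizer into the finite candidate set.
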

We do not provide a proof for this theorem within this context. However, interested readers are directed to \cite{Boyd_Vandenberghe_2004} for more details. It is straightforward to demonstrate that a sufficient condition for the uniqueness of the optimal solution is:

\begin{equation}\label{unique}
    \pm d_{1} \pm d_{2} \pm \cdots \pm d_{n} \neq 0.
\end{equation}
In our practical problem, the sufficient condition above is generally satisfied. Using the Theorem.\ref{theorem}, we can deduce another important corollary below.

\begin{corollary}
 Let the set $\mathcal{Z}=\left\{\bar{\alpha}_1,\cdots, \bar{\alpha}_N,  \bar{\beta}_1,\cdots, \bar{\beta}_n, z_0, z_{N+1}\right\}$, if $\bm{Z}^{\star} \in \mathbb{R}^{N}$ is the optimal solution to problem \eqref{work1:eq41}, then $\bm{Z}_i^{\star} \in \mathcal{Z}$.
\end{corollary}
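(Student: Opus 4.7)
The plan is to reduce the corollary to a one-dimensional application of Theorem \ref{theorem} along maximal runs of equal coordinates. Fix an index $i$ and let $[a,b]$ be the maximal contiguous block of indices on which $\bm{Z}^\star$ takes a common value $v := Z_i^\star$; that is, $Z_a^\star = Z_{a+1}^\star = \cdots = Z_b^\star = v$, while $Z_{a-1}^\star \neq v$ if $a>1$ and $Z_{b+1}^\star \neq v$ if $b<N$ (with the convention $Z_0^\star = z_0$, $Z_{N+1}^\star = z_{N+1}$).

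Next, I would exploit the optimality of $\bm{Z}^\star$ along the one-parameter family that keeps all coordinates outside $[a,b]$ frozen at $Z_j^\star$ and simultaneously perturbs $Z_a = Z_{a+1} = \cdots = Z_b = w$. All coupling terms $\gamma_j |Z_j - Z_{j-1}|$ with $j \in (a,b]$ vanish identically along this family, so the restricted objective reduces, up to an additive constant, to
\begin{equation*}
g(w) = \sum_{j=a}^{b} \bigl(\alpha_j |w - \bar\alpha_j| + \beta_j |w - \bar\beta_j|\bigr) + \gamma_a |w - Z_{a-1}^\star| + \gamma_{b+1} |w - Z_{b+1}^\star|.
\end{equation*}
Because $\bm{Z}^\star$ minimizes $f$, the scalar $w=v$ must minimize $g$. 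This puts us in exactly the setting of Theorem \ref{theorem}: under the mild genericity condition \eqref{unique} on the coefficients of $g$, its unique minimizer $v$ must lie in the kink set $\{\bar\alpha_a,\ldots,\bar\alpha_b,\bar\beta_a,\ldots,\bar\beta_b, Z_{a-1}^\star, Z_{b+1}^\star\}$.

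To finish, I would eliminate the two neighbor values. By the maximality of the run $[a,b]$, we have $v \neq Z_{a-1}^\star$ whenever $a>1$ and $v \neq Z_{b+1}^\star$ whenever $b<N$; in the boundary cases $Z_{a-1}^\star = z_0 \in \mathcal{Z}$ or $Z_{b+1}^\star = z_{N+1} \in \mathcal{Z}$, so these endpoints already lie in $\mathcal{Z}$. Either way, $v \in \mathcal{Z}$, and since $Z_i^\star = v$ and $i$ was arbitrary, the claim follows.

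The main obstacle is recognizing that a naive coordinate-wise application of Theorem \ref{theorem} is vacuous: freezing the neighbors $Z_{i-1}^\star, Z_{i+1}^\star$ and varying only $Z_i$ produces a 1D problem whose kink set always contains $Z_i^\star$ itself (via the neighbor that happens to coincide with it), so the conclusion gives no information. The essential trick is to collapse each plateau of equal coordinates into a single scalar degree of freedom \emph{before} invoking Theorem \ref{theorem}; only then do the kink locations become informative, and maximality of the plateau is what excludes the two neighbor values from being the minimizer.
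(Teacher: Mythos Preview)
Your proposal is correct and rests on the same core mechanism as the paper's proof: collapse a run of equal coordinates into a single scalar variable and invoke Theorem~\ref{theorem} on the resulting one-dimensional problem. The only difference is packaging: the paper argues by contradiction and grows the plateau one index at a time (apply Theorem~\ref{theorem} at $z_i^\star$, deduce $z_i^\star = z_{i\pm 1}^\star$, substitute, reapply, and recurse until a boundary value $z_0$ or $z_{N+1}$ is reached), whereas you jump directly to the \emph{maximal} plateau and apply Theorem~\ref{theorem} once. Your direct formulation is cleaner and makes explicit the point you flag at the end---that a naive single-coordinate application is vacuous---which the paper leaves implicit in its recursion; conversely, the paper's step-by-step version avoids having to define the maximal run up front. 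Both arguments rely on the same genericity assumption for uniqueness in Theorem~\ref{theorem}.
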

\begin{proof}
We will prove this by contradiction. Suppose $\mathbf{Z}^{\star} = (z^{\star}_1, z^{\star}_{2}, \ldots, z^{\star}_N)^T$ is an optimal solution to the optimization problem, and assume $z^{\star}_i \not\in \mathcal{Z}$. The objective function of the problem is given by:
\begin{equation}\label{prove:eq}
f(\mathbf{z}^{\star}) = \sum_{j=1}^{N} \left(\alpha_j |z^{\star}_j - \bar{\alpha}_j| + \beta_j |z^{\star}_j - \bar{\beta}_j|\right) + \sum_{j = 1}^{N+1} \gamma_j |z^{\star}_j - z^{\star}_{j-1}|.
\end{equation}
It is apparent from the above objective function that the terms involving $z_i^{\star}$ are:
\begin{align*}
g(z^{\star}_{i}) = \alpha_i |z^{\star}_i - \bar{\alpha}_i| + \beta_i |z^{\star}_i - \bar{\beta}_i| + \gamma_i |z^{\star}_i - z^{\star}_{i-1}| + \gamma_{i+1} |z^{\star}_{i+1} - z^{\star}_i|.
\end{align*}
From Theorem.\ref{theorem}, we know that if the optimal solution is unique, then $z_i^{\star} \in \{\bar{\alpha}_i, \bar{\beta}_i, z_{i-1}^{\star}, z_{i+1}^{\star}\}$. Therefore, either $z_i^{\star} = z_{i-1}^{\star}$ or $z_i^{\star} = z_{i+1}^{\star}$ must hold. Without loss of generality, we assume that $z_{i}^{\star} = z_{i-1}^{\star}$. 
Substituting $z^{\star}_i = z^{\star}_{i-1}$ into the objective function \eqref{prove:eq}, which yields:
\begin{equation}
\begin{aligned}
f(\bm{z^{\star}})|_{z^{\star}_i=z^{\star}_{i-1}} &= \sum_{j = 1}^{i-2} \alpha_j| z^{\star}_j-\bar{\alpha}_j| + \alpha_{i-1}| z^{\star}_{i-1}-\bar{\alpha}_{i-1}| + \alpha_i| z^{\star}_{i-1}-\bar{\alpha}_i| + \sum_{j=i+1}^{N} \alpha_{j+1}| z^{\star}_{j}-\bar{\alpha}_{j}|  \\
& + \sum_{j = 1}^{i-2} \beta_j| z^{\star}_j-\bar{\beta}_j| + \beta_{i-1}| z^{\star}_{i-1}-\bar{\beta}_{i-1}| + \beta_i| z^{\star}_{i-1}-\bar{\beta}_i| + \sum_{j=i+1}^{N} \beta_{j}| z^{\star}_{j}-\bar{\beta}_{j+1}|\\
& + \sum_{j=1}^{i-2} \gamma_j|z^{\star}_j-z^{\star}_{j-1}| + \gamma_{i-1}|z^{\star}_{i-1}-z^{\star}_{i-2}| + \gamma_{i+1}|z^{\star}_{i+1}-z^{\star}_{i-1}| + \sum_{j=i+2}^{N+1}\gamma_{j}|z^{\star}_{j}-z^{\star}_{j-1}|.
\end{aligned}
\end{equation}
According to the Theorem.\ref{theorem}, we can further deduce that $z_{i}^{\star}$ follows that:
$$
z_{i}^{\star} \in \{\bar{\alpha}_{i-1}, \bar{\alpha}_{i}, \bar{\beta}_{i-1}, \bar{\beta}_{i}, z_{i-2}^{\star}, z_{i+1}^{\star}\}.
$$
Since \( z_{i}^{\star} = z_{i-1}^{\star} \) and \( z_{i}^{\star} \notin \mathcal{Z} \), it can be deduced that one of two conditions must be satisfied: either \( z_i^{\star} = z_{i-1}^{\star} = z_{i-2}^{\star} \) or \( z_i^{\star} = z_{i-1}^{\star} = z_{i+1}^{\star} \). By recursively applying this logic, we ultimately arrive at the conclusion that \( z_i^{\star} \in \mathcal{Z} \). This finding contradicts our initial assumption that \( z_i^{\star} \notin \mathcal{Z} \), which completes the proof. 
\end{proof}
Indeed, the proof we have presented demonstrates that the optimal solution to the optimization problem \eqref{work1:eq41} is a member of a finite set \( \mathcal{Z} \). However, when \( N \) is large, the task of enumerating all possible combinations within \( \mathcal{Z} \) to identify the optimal solution becomes challenging. This difficulty arises from the exponential increase in potential combinations as \( N \) grows. 

The essence of the optimization problem \eqref{work1:eq41} is an L1-norm optimization problem, which is difficult to solve directly using the gradient descent method due to the discontinuity of its derivatives. We will introduce two numerical methods to solve the objective optimization problem. Based on physical practice, the first method proposed in our previous work \cite{2021The} chooses one reference field to approximate the global optimal reference field. In contrast, the second method utilizes the characteristics of this problem to transform it into a linear programming problem with equality constraints.

\subsection{The reference field method}
Although the reference field method cannot guarantee the search for the global optimal reference field, it can ensure that the reference field found is better than the classical IMC method (with $\widetilde{B} = 0$) and the Monte Carlo method based on the classical reference field (with $\widetilde{B} = B$), and has been proved to achieve better results and higher efficiency in practice. First, we select three reference solutions as follows:
\begin{equation}\label{three:candidate}
\begin{aligned}
&\widetilde{Z}^{(1)}=(\bar{\alpha}_1,\cdots,\bar{\alpha}_N),\\
&\widetilde{Z}^{(2)}=(\bar{\beta}_1,\cdots,\bar{\beta}_N),\\
&\widetilde{Z}^{(3)}=(0, \cdots, 0).
\end{aligned}
\end{equation}
Next, we compute the objective values $f(\widetilde{Z}^{(1)}), f(\widetilde{Z}^{(2)}), f(\widetilde{Z}^{(3)})$, and then sort the results in ascending order, denoting them as $h_1, h_2, h_3$. 
Then we relabel the variables of the corresponding sequence $h_{1}, h_{2}, h_{3}$ as  $\widetilde{Z}^{(1)}, \widetilde{Z}^{(2)},\widetilde{Z}^{(3)}$, respectively. Physically, it is reasonable to suppose that a ``good" solution should be close to $\widetilde{Z}^{(1)}$. Here, a perturbation coefficient is introduced as follows:

\begin{equation}
\delta=\lambda\frac{h_1}{h_1+h_2}.
\end{equation}
Here, $\lambda$ is a constant with a value range of $(0,1]$. From above definition, we redefine the alternative solution $\widetilde{Z}^{(3)}$ as:
\begin{equation}
\widetilde{Z}^{(3)}=(1-\delta) \widetilde{Z}^{(1)}+
\delta\widetilde{Z}^{(2)}.\label{work2:1}
\end{equation}
Here, $\widetilde{Z}^{(3)}$ in the right-hand side is the previous defined by \eqref{three:candidate} while the left-hand side is the redefined solution. After obtaining the new $\widetilde{Z}^{(3)}$, we recalculate $f(\widetilde{Z}^{(3)})$. If $f(\widetilde{Z}^{(3)}) < h_2$, we repeat the former procedure to update $h_1, h_2, \widetilde{Z}^{(1)}, \widetilde{Z}^{(2)}$. If $f(\widetilde{Z}^{(3)}) > h_2$, then we set $\delta = \frac{1}{2}\delta$ and recalculate $\widetilde{Z}^{(3)}$. The specific algorithm of this procedure is shown as follows:

1) Set the iteration counter $k = 0$, the scaling factor $\lambda = 1$, and the initial candidate solutions as:
   \begin{align*}
   & \widetilde{Z}^{(1)} = (\bar{\alpha}_1, \cdots, \bar{\alpha}_N), \\
   & \widetilde{Z}^{(2)} = (\bar{\beta}_1, \cdots, \bar{\beta}_N), \\
   & \widetilde{Z}^{(3)} = (0, \cdots, 0) .
   \end{align*}

2) Increment $k = k + 1$, and compute $f(\widetilde{Z}^{(1)}), f(\widetilde{Z}^{(2)}), f(\widetilde{Z}^{(3)})$. Sort the results in ascending order and denote them as $h_1, h_2, h_3$, and relabel the solutions corresponding to $h_1$ and $h_2$ as $\widetilde{Z}^{(1)}$ and $\widetilde{Z}^{(2)}$, respectively.

3) Calculate the perturbation coefficient $\delta = \lambda \frac{h_1}{h_1 + h_2}$, and then recompute $\widetilde{Z}^{(3)}$ using equation \eqref{work2:1}:
\begin{equation}
   \widetilde{Z}^{(3)} = (1 - \delta ) \widetilde{Z}^{(1)} + \delta \widetilde{Z}^{(2)}.
\end{equation}

4) If $f(\widetilde{Z}^{(3)}) < h_2$, return to step (2). If $f(\widetilde{Z}^{(3)}) \ge h_2$, set $\lambda = \frac{1}{2} \lambda$, increment $k = k + 1$, and return to step (3).

5) If $k > 10$, terminate the iteration. Otherwise, repeat the procedure 2) to 4).

Setting a maximum number of iteration steps for convergence is necessary and crucial. Fewer steps may lead to divergence, while more steps affect the computational efficiency. We set the maximum iteration to 10 in the numerical simulation as an empirical value. Although this method is quite primary, it is straightforward to compute and has physical significance. For standard-form heat radiation transport problems, setting $z = 0$ at each time step corresponds to the third alternative solution; for the classical reference field method, the reference field $z = (\bar{\beta}_1, \cdots, \bar{\beta}_n)$ corresponds to the first alternative solution. Our selection scheme guarantees that the solution is "better" than the three alternatives.

\subsection{The linear programming method}
Although the candidate reference field method gives a fast alternative solution, it may not obtain the exact solution and suffers from the divergence iteration. In this work, we find that the \eqref{work1:eq41} can be reformulated as a linear programming problem by variable substitutions, which can be solved exactly and efficiently.

\subsubsection{The linear programming with equality constraints}
First, we reformulate the \eqref{work1:eq41} from the unconstrained problem to a constrained optimization problem, which is given by:
\begin{equation}\label{work2:eq2}
\begin{aligned}
\min \quad & \|X\|_1\\
\mathrm{s.t.} \quad
& X_i=\alpha_i( z_i-\bar{\alpha}_i),\quad i=1,\cdots,N,\\
& X_{N+i}=\beta_i( z_i-\bar{\beta}_i),\quad i=1,\cdots,N,\\
& X_{2N+i}=\gamma_i(z_i-z_{i-1}),\quad i=1,\cdots,N+1.
\end{aligned}
\end{equation}
Here, $X \in \mathbb{R}^{3N+1}$. Eliminating $z_{i}$ by simple manipulations, the constraint conditions of problem \eqref{work2:eq2} can be reformulated as follows:
\begin{equation}\label{linear:1}
\begin{aligned}
& \frac{\beta_i}{\alpha_i}X_i-X_{N+i}=\beta_i\bar{\beta}_i-\beta_i\bar{\alpha}_i,\quad i=1,\cdots,N, \\
&X_{2N+1}-\gamma_1\left(\frac{X_1}{\alpha_1}+\bar{\alpha}_1-z_0\right) =0,\\
&X_{2N+j}-\gamma_j\left(\frac{X_j}{\alpha_j}+\bar{\alpha}_j-\frac{X_{j-1}}{\alpha_{j-1}}-\bar{\alpha}_{j-1}\right) =0, \quad j=2,\cdots,N,\\
& X_{3N+1}-\gamma_{N+1}\left(z_{N+1}-\frac{X_N}{\alpha_N}-\bar{\alpha}_N\right) =0.  
\end{aligned}
\end{equation}
It is apparent that \eqref{work2:eq2} is equal to \eqref{linear:1}. Furthermore, we denote the above constraints in the matrix form $AX = b$, and then the optimization problem \eqref{linear:1} is simplified as:
\begin{equation}\label{work2:eq3}
\begin{aligned}
\min \quad & \|X\|_1,\\
\mathrm{s.t.} \quad & A X=b,
\end{aligned}
\end{equation}
where $A \in \mathbb{R}^{(2N+1)\times(3N+1)}$ and $b \in \mathbb{R}^{2N+1}$. It is easy to see that $A$ is a block sparse matrix where non-zero elements are distributed near the diagonal of the block matrix, which is written as:
\begin{equation}\label{matrix:A}
    A = \begin{pmatrix}
            A_{1, 1} & A_{1,2} & A_{1,3}  \\
            A_{2, 1} & A_{2,2} & A_{2,3}
        \end{pmatrix}_{(2N+1) \times(3N+1)}.
\end{equation}
Here, each block matrix is defined as:
\begin{equation}\label{matrix:A2}
    \begin{aligned}
        A_{1,1} &= \begin{pmatrix}
            \frac{\beta_{1}}{\alpha_{1}} & & &   \\
             & \frac{\beta_{2}}{\alpha_{2}} & &  \\
             & &\ddots & \\ 
             & & & &\frac{\beta_{N}}{\alpha_{N}}
        \end{pmatrix}_{N \times N}\quad
        A_{1,2} = \begin{pmatrix}
            -1 & & &\\
             & -1 & &  \\
             & &\ddots & \\ 
             & &  &-1
        \end{pmatrix}_{N \times N}\quad
        A_{1, 3} = \begin{pmatrix}
             & &\\
             & \bm{0}&\\ 
             & &\\
        \end{pmatrix}_{N \times (N+1)},\\   
        A_{2,1} &= \begin{pmatrix}
            -\frac{\gamma_{1}}{\alpha_{1}} &  & & & \\
            \frac{\gamma_{2}}{\alpha_{1}} & -\frac{\gamma_{2}}{\alpha_{2}} & & &  \\
            & \ddots & \ddots &  \\
            & & \frac{\gamma_{N}}{\alpha_{N-1}} & -\frac{\gamma_{N}}{\alpha_{N}} & \\
            & & & \frac{\gamma_{N+1}}{\alpha_{N}}&
        \end{pmatrix}_{(N+1)\times N}\quad 
        A_{2,2} = \begin{pmatrix}
             & &\\
             & \bm{0}&\\ 
             & &\\
        \end{pmatrix}_{(N+1) \times N}\quad          
        A_{3,3} = \begin{pmatrix}
            1 & & &\\
             & 1 & &  \\
             & &\ddots & \\ 
             & &  &1
        \end{pmatrix}_{(N+1) \times (N+1)}.\\     
    \end{aligned}
\end{equation}
The right-side vector $b$ can also be defined as follows:
\begin{equation}\label{bb}
    \begin{aligned}
        b_{i}   & =\beta_i(\bar{\beta}_i-\bar{\alpha}_i), \quad i=1,2,\cdots, N,\\
        b_{N+1} & = \gamma_1(\bar{\alpha}_1-z_0),\\
        b_{N+j} & = \gamma_j(\bar{\alpha}_j-\bar{\alpha}_{j-1}),\quad j=2,3,\cdots, N,\\
        b_{2N+1} & = \gamma_{N+1}(z_{N+1}-\bar{\alpha}_N).
    \end{aligned}
\end{equation}

Generally, \eqref{work2:eq3} is referred to as the basis pursuit problem, which has widespread applications in compressed sensing, image processing, machine learning, and other fields. To remove the absolute value, the positive part $X_{i}^{+}$ and the negative part $X_{i}^{-}$ are introduced for $i$-th component of $X$. Using these denotations, we obtain that 
\begin{equation}\label{abs}
    \begin{aligned}
    X_{i} & = X_{i}^{+} - X_{i}^{-},\\
    |X_{i}| & = X_{i}^{+} + X_{i}^{-}.
    \end{aligned}
\end{equation}
By substituting \eqref{abs} into \eqref{work2:eq3}, we obtain a linear programming problem with equality constraints, which is written as:
\begin{equation}
    \begin{aligned}
        \min \quad & f(X^{+}, X^{-}) = \sum_{i=1}^{3N+1} X_{i}^{+}+X_{i}^{-}\\
        \mathrm{s.t.} \quad & \begin{pmatrix}
            A & -A
        \end{pmatrix}\begin{pmatrix}
            X^{+}\\
            X^{-}
        \end{pmatrix} = b,\\
        & X^{+}_{i}, X^{-}_{i} \geq 0, i = 1, 2, \cdots, 3N+1.
    \end{aligned}
    \label{work2:eq5}
\end{equation}
Here $X^{+}, X^{-}$ are the positive part vector and the negative part of the vector $X$, whose components are all non-negative. $A$ and $b$ is same as \eqref{matrix:A2} and \eqref{bb}, respectively. Alternatively, we further rewrite the \eqref{work2:eq5} as the standard linear programming formulation:
\begin{equation}\label{work2:eqq6}
    \begin{aligned}
        \min \quad & f(\Tilde{X}) = \sum_{i}^{6N+2} \Tilde{X}_{i}\\
        \mathrm{s.t.} \quad & \Tilde{A}\Tilde{X} = b,\\
        & \Tilde{X}_{i} \geq 0, \quad i = 1, 2,\cdots,6N+2,
    \end{aligned}
\end{equation}
where 
$$
    \Tilde{A} = \begin{pmatrix}
        A & -A
    \end{pmatrix} \in \mathbb{R}^{(2N+1) \times (6N+2)}, \quad \Tilde{X} = \begin{pmatrix}
        X^{+}\\
        X^{-}
    \end{pmatrix} \in \mathbb{R}^{6N+2},
$$
and $\Tilde{X}_{i}$ is the $i$-th component of $\Tilde{X}$.
The standard linear programming problem can be effectively addressed using the classical simplex or interior-point methods, along with their various modifications. While we will not delve into the specifics of these methodologies in this discussion, we will compare their numerical results in the next section. The simplex method typically involves a higher number of iterations with lower computational costs per step, whereas the interior-point method is characterized by a smaller number of iterations, incurring higher computational expenses, particularly within the context of large-scale systems. Regardless of the approach—simplex or interior-point—securing a ``good" feasible solution or a well-chosen initial point is pivotal for minimizing computational expenses and enhancing overall efficiency, which is explored in-depth in reference \cite{NoceWrig06}.

\subsubsection{The fast linear programming solver}
To further enhance the computational efficiency of linear programming problems, we employ a novel graph-neural-network-based fast solver, MindOpt \cite{mindopt}, which leverages graph neural network methods \cite{unknown2023}. Graph neural networks are neural networks defined based on the geometric structure of graphs, including classic graph convolutional networks and message-passing graph neural networks \cite{Wu_2021}. In recent years, graph neural networks have been extensively used to accelerate the solution of optimization problems \cite{ding2019accelerating,sato2019approximation,chen2023representing}, such as Gasse et al. \cite{gasse2019exact} transform a mixed-integer optimization problem into a bipartite graph and employ a graph neural network to expedite the solution. Chen et al. \cite{chen2023representing} propose transforming a linear programming problem into an equivalent weighted bipartite graph by training the weighted bipartite graph to obtain the feasibility of the original problem and predict an optimal solution, which serves as an initial value for subsequent dual simplex methods or interior point methods. 

According to the literature \cite{klee}, the worst-case computational complexity of the simplex method is $\mathcal{O}(2^{d})$, where $d$ represents the number of vertices. In practical scenarios, grid refinement involves adding vertices to the simplex and increasing the number of iterations, which can be computationally intensive. Consequently, the interior point method is often preferred for large-scale problems. It is also crucial to select an appropriate initial value at the start of the iteration. Based on theoretical insights from \cite{chen2023representing}, MindOpt has been developed as a highly efficient solver for various optimization problems, including linear programming \cite{unknown2023}. MindOpt provides a ``good" initial guess for feasibility and solution, allowing the optimal solution to be achieved within a few iteration steps.

\begin{remark}
Based on the properties mentioned in Section \ref{property}, condition \eqref{unique} is generally satisfied in most practical cases, ensuring that the optimization problem has a unique solution. To enhance the robustness of the algorithm, we will use the reference field method to solve cases where MindOpt fails to find the optimal solution.  
\end{remark}

\section{Numerical results}
In this section, several benchmark tests are conducted to validate the efficiency and accuracy of the global optimal reference field (GORF) method. In the following examples, the units are defined as follows: length in centimeters (cm), mass in grams (g), time in nanoseconds (ns), temperature in kilo electron-volts (keV), and energy in $10^{9}$ Joules (GJ). With these units, the speed of light is 29.98 cm/ns, and the radiation constant \( a \) is 0.01372 \(\text{GJ}/\text{cm}^3\text{-keV}^4\).

\subsection{Example 1}
First, we compared the computational efficiency of the MindOpt solver with the classic linear programming method, such as the simplex method and interior-point method, in solving the equation system \eqref{work2:eq2}. We randomly sampled numbers uniformly distributed between 0 and 1 to determine all constants, including \(\alpha_1, \ldots, \alpha_N\), \(\bar{\alpha}_1, \ldots, \bar{\alpha}_N\), \(\beta_1, \ldots, \beta_N\), \(\bar{\beta}_1, \ldots, \bar{\beta}_N\), \(z_0\), and \(z_{N+1}\). We varied \(N\) to be 500, 1000, 5000, and 9000. We generated three data sets for each value of \(N\) and solved the optimization problem using the Simplex, interior-point, and MindOpt solver. Below are the computational results.
\begin{table}[ht]
\centering
\caption{$\rm{N}=500$}
\begin{tabular}{|c|cc|cc|cc|}
\hline
&\multicolumn{2}{|c|}{first set of data}&\multicolumn{2}{|c|}{second set of data}&\multicolumn{2}{|c|}{third set of data}\\
&CPU time(s) & result &CPU time(s) & result & CPU time(s) & result\\
\hline
Simplex&0.0295 &96.9575&0.0311&75.2883&0.0304&87.2351\\
Interior-point &0.0825&96.9575   &0.0283&75.2883&0.0285&87.2351\\
MindOpt&0.0232&96.9575 &0.0235&75.2883&0.0233&87.2351\\
\hline
\end{tabular}
\label{tab:my_label1}
\end{table}

\begin{table}[ht]
\centering
\caption{$\rm{N}=1000$}
\begin{tabular}{|c|cc|cc|cc|}
\hline
&\multicolumn{2}{|c|}{first set of data}&\multicolumn{2}{|c|}{second set of data}&\multicolumn{2}{|c|}{third set of data}\\
&CPU time(s) & result &CPU time(s) & result & CPU time(s) & result\\
\hline
Simplex&0.1183 &198.4790   &0.1372&190.9449	&0.1028	&186.7912\\
Interior-point&0.1206&198.4790    &0.1161&190.9449	&0.1148	&186.7912\\
MindOpt&0.0310&198.4790    &0.0598&190.9449	&0.0318	&186.7912\\
\hline
\end{tabular}
\label{tab:my_label2}
\end{table}

\begin{table}[ht!]
\centering
\caption{$\rm{N}=5000$}
\begin{tabular}{|c|cc|cc|cc|}
\hline
&\multicolumn{2}{|c|}{first set of data}&\multicolumn{2}{|c|}{second set of data}&\multicolumn{2}{|c|}{third set of data}\\
&CPU time(s) & result &CPU time(s) & result & CPU time(s) & result\\
\hline
Simplex&41.0556 &963.4350  &45.0728&979.0820		&33.1757	&796.0812\\
Interior-point&6.5232&963.4347    &4.7127&1979.0819	&5.0965	&796.0812\\
MindOpt&0.1321&963.4347    &0.1705&979.0819		&0.1706	&796.0812\\
\hline
\end{tabular}
\label{tab:my_label3}
\end{table}

\begin{table}[ht]
\centering
\caption{$\rm{N}=9000$}
\begin{tabular}{|c|cc|cc|cc|}
\hline
&\multicolumn{2}{|c|}{first set of data}&\multicolumn{2}{|c|}{second set of data}&\multicolumn{2}{|c|}{third set of data}\\
&CPU time(s) & result &CPU time(s) & result & CPU time(s) & result\\
\hline
Simplex&350.4180 &1754.4630  &344.9500&1750.4412		&327.2850	&1747.3901\\
Interior-point&17.4137&1754.4630   &18.8084&1750.4363		&18.0382	&1747.3901\\
MindOpt&0.1822&1754.4630   &0.1990&1750.4363		&0.2130	&1747.3901\\
\hline
\end{tabular}
\label{tab:my_label4}
\end{table}
Tables \ref{tab:my_label1}, \ref{tab:my_label2}, \ref{tab:my_label3}, and \ref{tab:my_label4} demonstrate that all three methods successfully identify the optimal solution to the optimization problem. However, the MindOpt solver exhibits a considerably higher computational efficiency than the other two methods. Specifically, as \(N\) increases, the efficiency of the MindOpt solver improves significantly. For \(N = 9000\), the MindOpt solver's computational efficiency is 2 to 3 orders of magnitude greater than that of the other methods.


\subsection{Example 2}
In this example, we consider a one-dimensional grey test problem characterized by sharp spatial gradients in the solution. The setup involves a 1 cm box with a density \(\rho\) of \(1 \, \text{g/cc}\). The initial material temperature \(T_{l0}\) is 1 keV in the region where \(x \leq 0.5\) cm, while the initial material temperature \(T_{r0}\) is set to either \(10^{-3}\) keV or \(10^{-5}\) keV in the region where \(x > 0.5\) cm, respectively. A reflecting boundary is placed at \(x = 0\), and a vacuum boundary is located at \(x = 1\) cm. In this example, the opacity is \(\sigma = 1000 \, \text{cm}^{-1}\), and the specific heat is \(0.05 \, \text{GJ/g/keV}\). We adopt a fixed spatial step size of \(\Delta x = 0.0025\) cm and a fixed time step of \(\Delta t = 0.01\) ns, with the stop time set at 1 ns. Both the classical Implicit Monte Carlo (IMC) method and the GORF method are employed to simulate this case. For the GORF method, 20000 Monte Carlo particles are used, while 400000 particles are used for the classical IMC method. The computational results from the classical IMC method with 4 million Monte Carlo particles are used as the reference solution to facilitate a meaningful comparison of statistical errors.
\begin{figure}[ht!]
\centering
\subfloat[$T_{r0}$ is $10^{-3}$ keV]{\includegraphics[width=0.48\textwidth]
{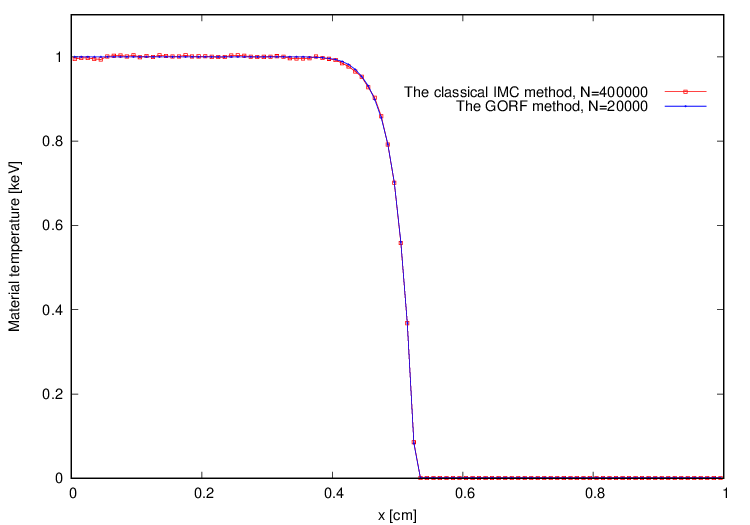}\label{exam3_1a}}
\subfloat[$T_{r0}$ is $10^{-5}$ keV]{\includegraphics[width=0.48\textwidth]{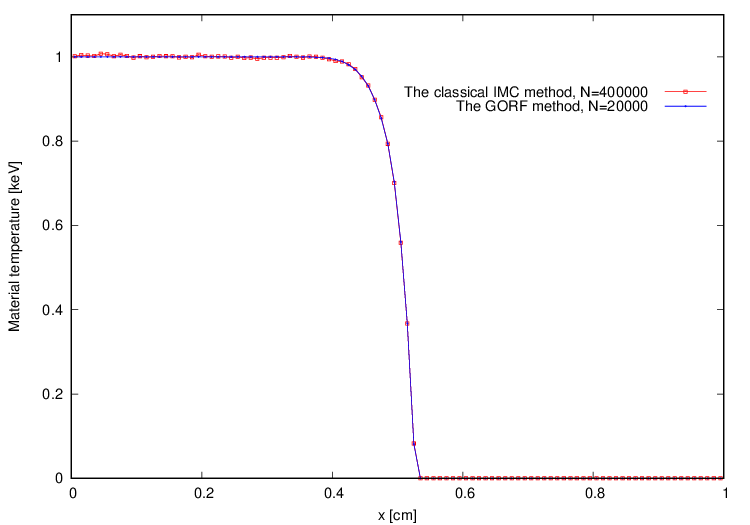}\label{exam3_1b}}
\caption{The numerical results of the spatial distribution of material temperature in different initial material temperatures $T_{r0}$. Both the left and right subfigures show the comparison between the GORF method (blue line) and the classical IMC method (red dot line).}
\label{example3_1}
\end{figure}

Figure \ref{example3_1} illustrates the numerical results for the spatial distribution of material temperature. Figure \ref{exam3_1a} shows the material temperature calculated where $T_{0r}$ is $10^{-3}$ keV. Figure \ref{exam3_1b} shows the material temperature calculated where $T_{0r}$ is $10^{-5}$ keV. In both cases, the computational results of the two methods are basically consistent, validating the correctness of the present numerical methods.
\begin{figure}[ht!]
\centering
\subfloat[$T_{r0}$ is $10^{-3}$ keV]{\includegraphics[width=0.48\textwidth]
{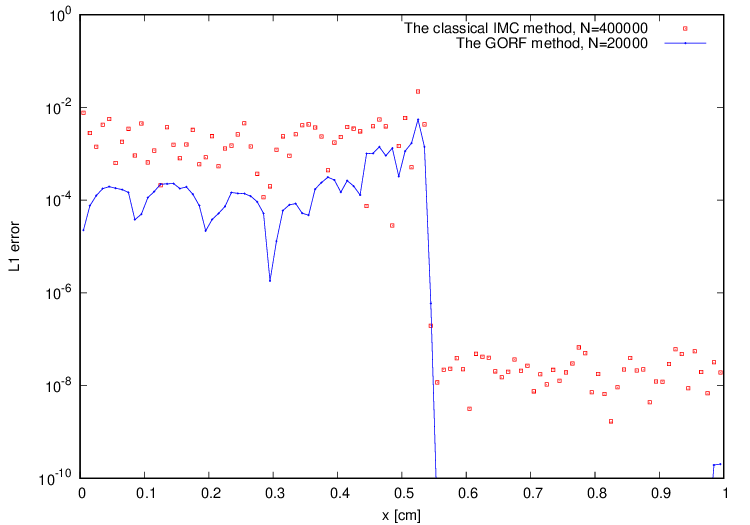}\label{exam3_2a}}
\subfloat[$T_{r0}$ is $10^{-5}$ keV]{\includegraphics[width=0.48\textwidth]{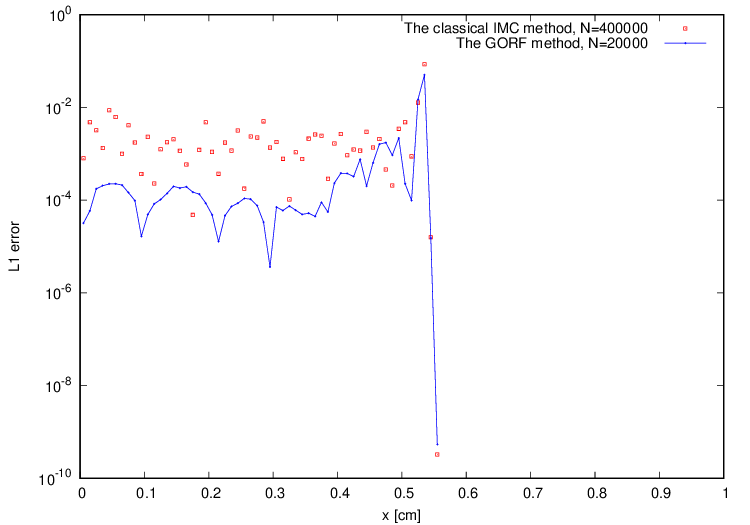}\label{exam3_2b}}
\caption{ Comparison of the relative error between the GORF method (blue line) and the classical IMC method (red dots) in different initial material temperatures $T_{r0}$ in Example 2.}
\label{example3_2}
\end{figure}

Figure \ref{example3_2} compares the relative errors between the two methods. As shown in Figure \ref{exam3_2a}, the GORF method consistently exhibits smaller relative errors than the classical IMC method, both in smooth regions and in regions with discontinuities. In Figure \ref{exam3_2b}, the GORF method demonstrates a significant advantage in smooth regions. However, in areas with large temperature gradients, the statistical errors of the two methods begin to approach each other.

Figure \ref{example3_3} illustrates the CPU time required by both methods to compute up to 1 ns. It is evident that the classical IMC method demands significantly more CPU time due to the need to track 400000 Monte Carlo particles. In contrast, Figure \ref{example3_4} displays the total energy emitted by all radiation sources in each time step for both methods. The GORF method emits considerably less total energy from Monte Carlo particles at each time step compared to the classical IMC method.
\begin{figure}[ht!]
\centering
\subfloat[$T_{r0}$ is $10^{-3}$ keV]{\includegraphics[width=0.48\textwidth]{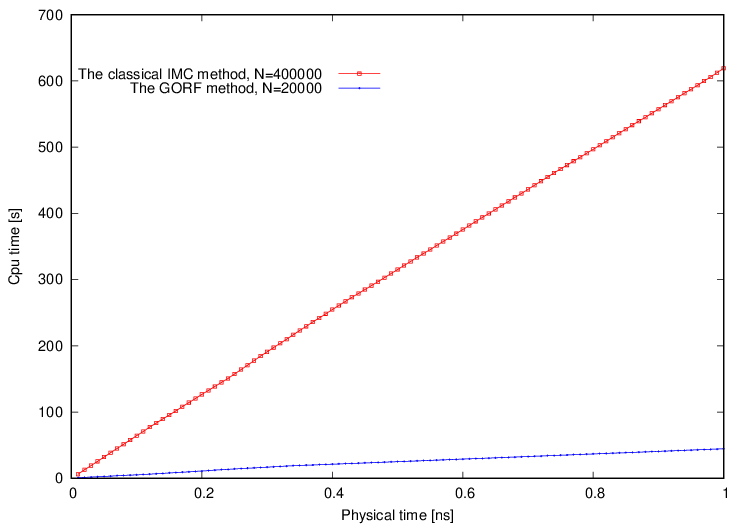}\label{exam3_3a}}
\subfloat[$T_{r0}$ is $10^{-5}$ keV]{\includegraphics[width=0.48\textwidth]{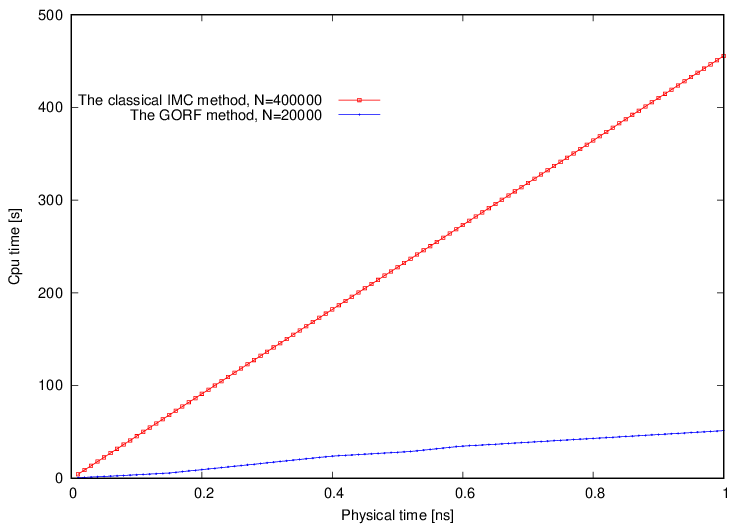}\label{exam3_3b}}
\caption{Comparison of the CPU time between the GORF method (blue line) and the classical IMC method (red dot) in different initial material temperatures $T_{r0}$ in Example 2.}
\label{example3_3}
\end{figure}

\begin{figure}[ht!]
\centering
\subfloat[$T_{r0}$ is $10^{-3}$ keV]{\includegraphics[width=0.48\textwidth]{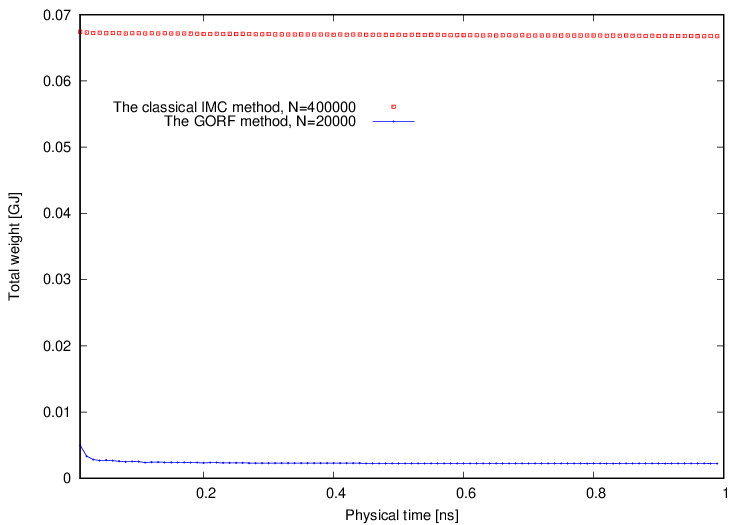}\label{exam3_4a}}
\subfloat[$T_{r0}$ is $10^{-5}$ keV]{\includegraphics[width=0.48\textwidth]{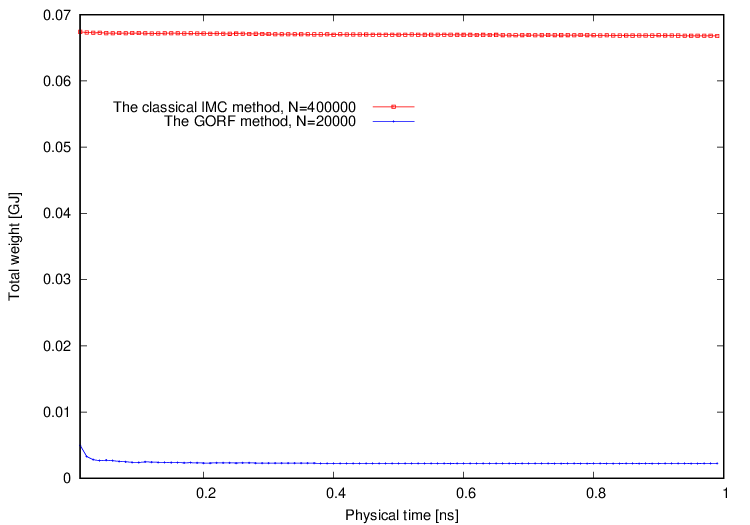}\label{exam3_4b}}
\caption{Comparison of the total energy of Monte Carlo particles emitted per time step between the GORF method (blue) and the classical IMC method (red) in different initial material temperatures $T_{r0}$ in Example 2.}
\label{example3_4}
\end{figure}

In the following numerical examples, we employed two simulation methods. The first method is the GORF method, where the global optimal reference field is computed using the MindOpt solver. The second method is the classical IMC method. For the GORF method, simulations were run with 5000 and 40000 Monte Carlo particles, while the classical IMC method used 320000 particles. To facilitate the comparison of statistical errors, the results from the classical IMC method with 2 million Monte Carlo particles serve as the reference solution. Additionally, we present the relative error in material temperature computed by both methods at each grid cell.

\subsection{Example 3} \label{example3}
The one-dimensional Marshak wave problem under the gray approximation \cite{SUN2015265} is tested in this example. We consider a slab of length 0.4 cm that is initially cold, with a temperature of \(10^{-3}\) keV. The heat capacity is \(c_v = 0.3 \, \mathrm{GJ/cc/keV}\), and the absorption opacity is \(\sigma = 300 \, \mathrm{cm^{-1}}\). A Planck source with a constant temperature of 1 keV is applied at the left boundary, while a reflective boundary condition is imposed at the right boundary. The slab is uniformly divided into 800 cells, with a time step of \(10^{-3}\) ns and the simulation runs until \(t = 15\) ns.
\begin{figure}[ht!]
\centering
\subfloat[The GORF method and classical IMC method]{\includegraphics[width=0.49\textwidth]{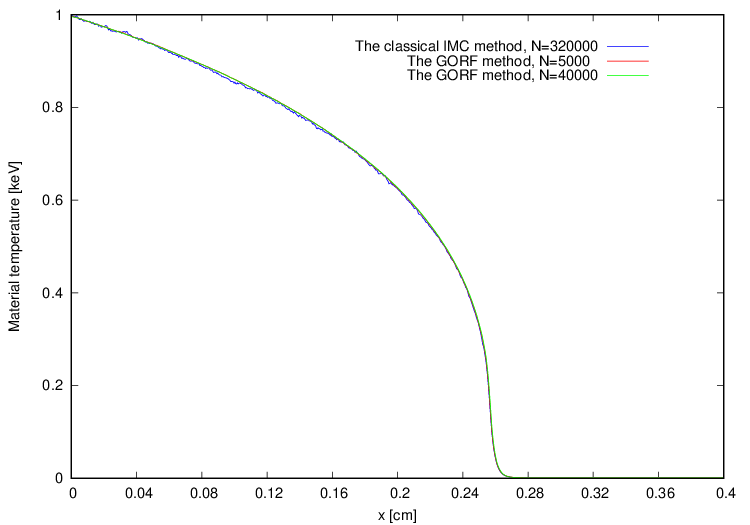}\label{exam1_1a}}
\subfloat[The GORF method and the reference solution]{\includegraphics[width=0.49\textwidth]{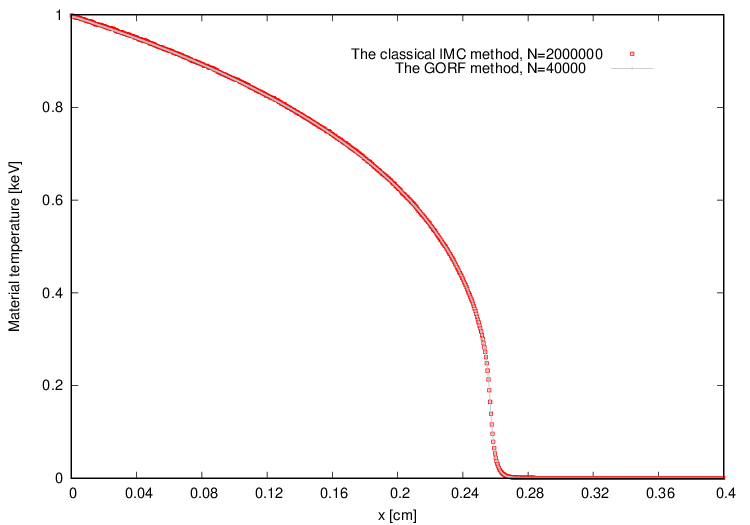}\label{exam1_1b}}
\caption{The comparative numerical results between the GORF method and the classical IMC method on the spatial distribution of material temperature in Example 3. Left (a): spatial distribution of material temperature by the 5000 particles (red line) and 40000 (green line) particles in the GORF method and the 320000 particles in the classical IMC method (blue line). Right (b): spatial distribution of material temperature by the GORF method (grey line) and reference solution (red square).}
\label{example1}
\end{figure}

Figure \ref{example1} presents the numerical results for the spatial distribution of material temperature. In Figure \ref{exam1_1a}, the material temperatures computed by both methods are shown, and it is clear that the results are generally consistent with the reference solution. Figure \ref{exam1_1b} compares the numerical results of the GORF method to the reference solution, demonstrating that the GORF method exhibits significantly lower numerical noise. In contrast, while the statistical noise in the classical IMC method decreases as the number of Monte Carlo particles increases, it remains relatively high overall.

\begin{figure}[ht!]
    \centering
    \includegraphics[width=0.6\linewidth]{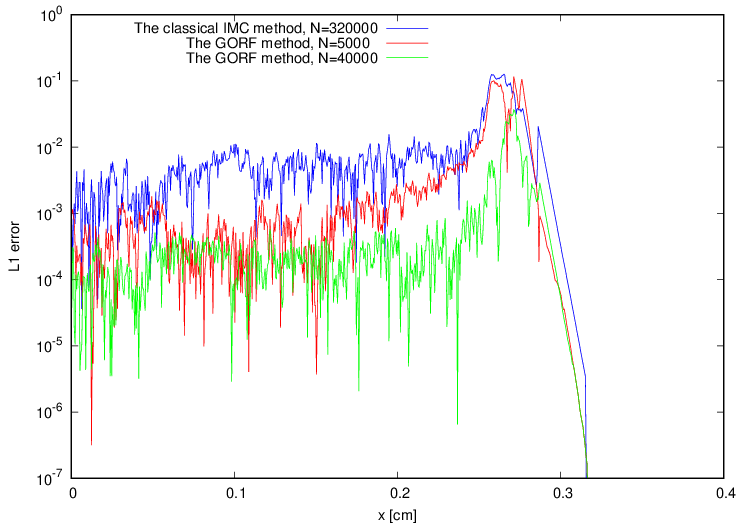}
    \caption{Comparison of the relative error of the GORF method and the classical IMC method in Example 3.}
    \label{exam1_2}
\end{figure}

Figure \ref{exam1_2} compares the relative errors between the two methods. In smooth regions, the GORF method, using only 40000 Monte Carlo particles, achieves relative errors of 1 to 2 orders of magnitude smaller than those of the classical IMC method with 320000 particles. In regions with large temperature gradients, spatial discretization errors become more pronounced, leading to discrepancies between the methods. Nevertheless, the GORF method generally maintains smaller relative errors overall.

\begin{figure}[ht!]
\centering
\includegraphics[width=0.6\textwidth]
{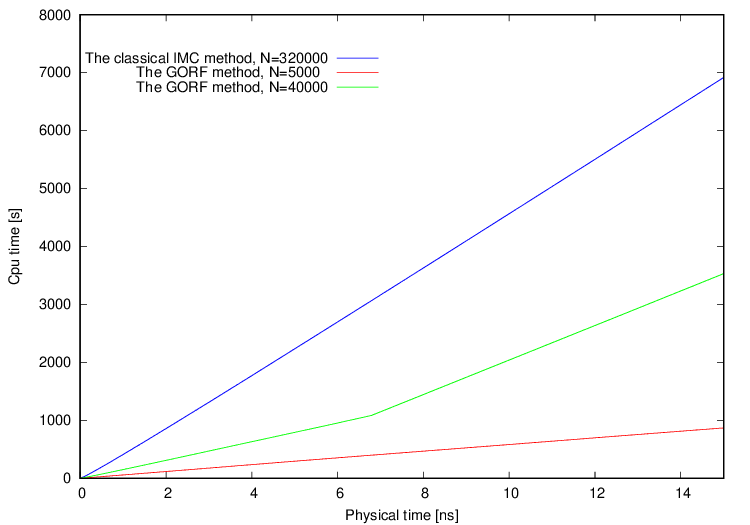}
\caption{Comparison of the CPU time between the GORF method and the classical IMC method in Example 3.}
\label{exam1_3}
\end{figure}

Figure \ref{exam1_3} presents the CPU time consumption for both methods when simulating up to 15 ns. At a Monte Carlo particle count of 40000, the GORF method requires more CPU time than when it is 5000 particles, primarily due to the overhead of solving a linear programming problem. However, as the number of Monte Carlo particles increases to 320000, the classical IMC method demands significantly more CPU time than the GORF method. Furthermore, even at this higher particle count, the relative error of the classical IMC method remains 1 to 2 orders of magnitude larger than that of the GORF method, consistent with earlier observations.


Figure \ref{exam1_4} illustrates the total energy of Monte Carlo particles emitted by all radiation sources at each time step for both methods. As shown, varying the number of Monte Carlo particles has little impact on the GORF method, which remains almost unchanged. In contrast, the GORF method consistently emits significantly less total energy from all radiation sources in each time step compared to the classical IMC method. When combined with the results from Figure \ref{example1} and Figure \ref{exam1_3}, it is evident that the GORF method effectively reduces the total emitted energy, thereby significantly decreasing statistical noise.

\begin{figure}
\centering
\includegraphics[width=0.6\textwidth]
{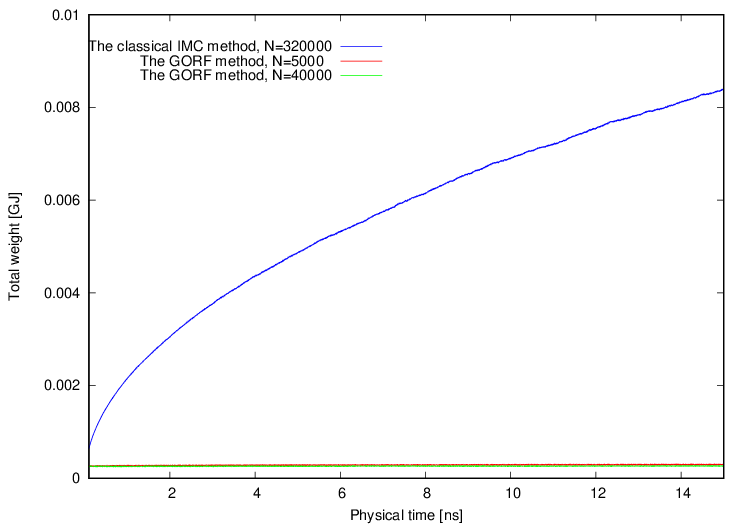}
\caption{Comparison of the total energy of Monte Carlo particles emitted per time step in Example 3.}
\label{exam1_4}
\end{figure}


\begin{figure}[ht!]
\centering
\subfloat[The GORF method and classical IMC method]{\includegraphics[width=0.49\textwidth]{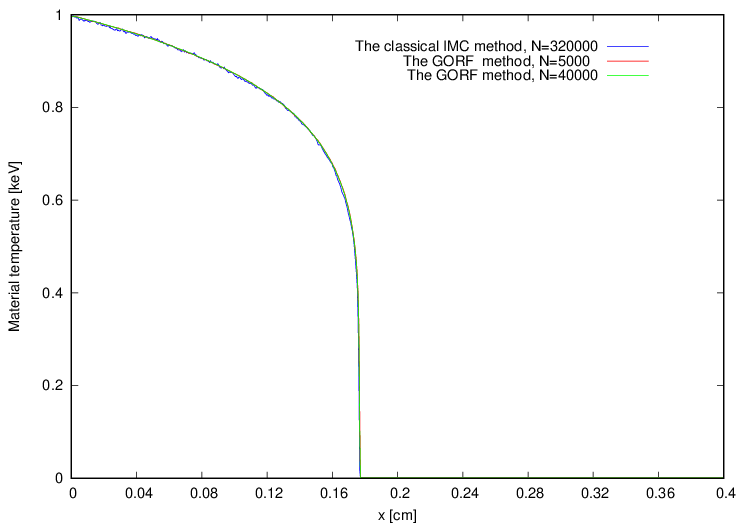}\label{exam2_1a}}
\subfloat[The GORF method and the reference solution]{\includegraphics[width=0.49\textwidth]{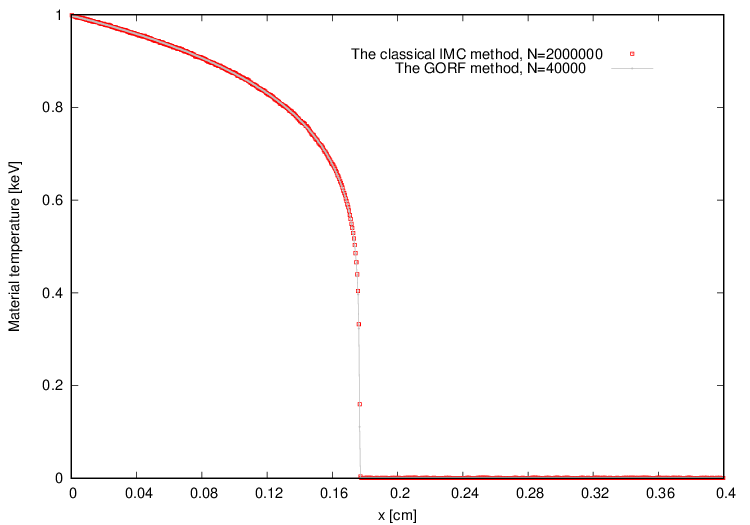}\label{exam2_1b}}
\caption{The comparative numerical results between the GORF method and the classical IMC method on the spatial distribution of material temperature in Example 4. Left (a): spatial distribution of material temperature by the 5000 particles (red line) and 40000 (green line) particles in the GORF method and the 320000 particles in the classical IMC method (blue line). Right (b): spatial distribution of material temperature by the GORF method (grey line) and reference solution (red square).}
\end{figure}
\subsection{Example 4}
In this example, we set the absorption opacity \(\sigma\) to \(300 / T^3 \, \mathrm{cm^{-1}}\) while keeping all other conditions the same as in Example 3, which is originated on \cite{SUN2015265}.

Figures \ref{exam2_1a} and \ref{exam2_1b} display the spatial distribution of material temperature computed by both methods, along with a comparison to the reference solution. Figure \ref{exam2_2} presents a comparison of the relative errors between the two methods, while Figure \ref{exam2_3} illustrates the CPU time required by both methods for computations up to 15 ns. Figure \ref{exam2_4} shows the total energy of Monte Carlo particles emitted by all radiation sources at each time step for both methods. From these figures, we can deduce that the conclusions drawn for Example 4 are similar to those from Example 3. Moreover, these examples confirm that our numerical results align well with theoretical expectations.

\begin{figure}[ht!]
    \centering
    \includegraphics[width=0.6\linewidth]{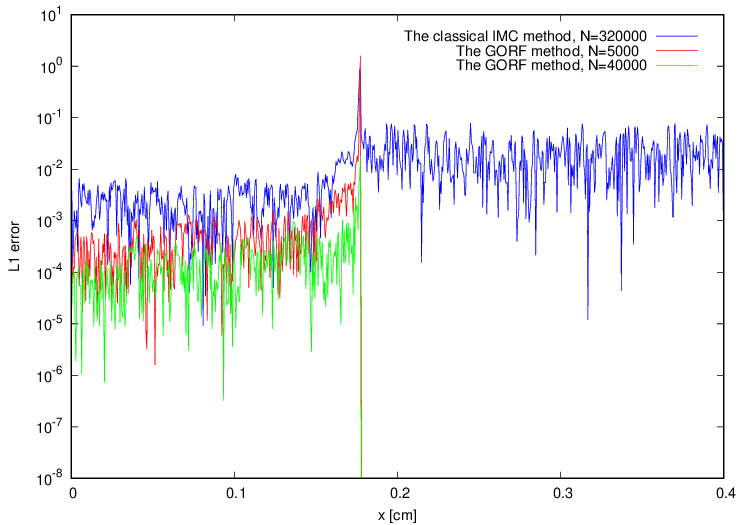}
    \caption{Comparison of the relative error between the GORF method and the classical IMC method in Example 4.}
    \label{exam2_2}
\end{figure}
\begin{figure}[ht!]
\centering
\includegraphics[width=0.6\textwidth]
{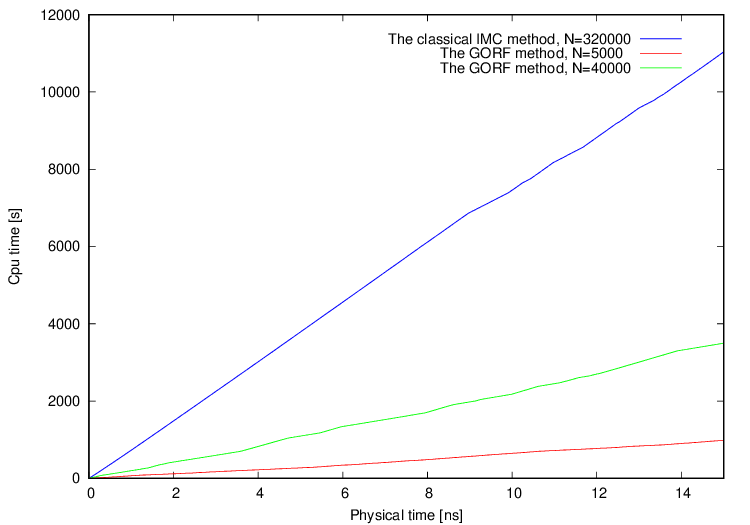}
\caption{Comparison of the CPU time between the GORF method and the classical IMC method in Example 4.}
\label{exam2_3}
\end{figure}
\begin{figure}
\centering
\includegraphics[width=0.6\textwidth]
{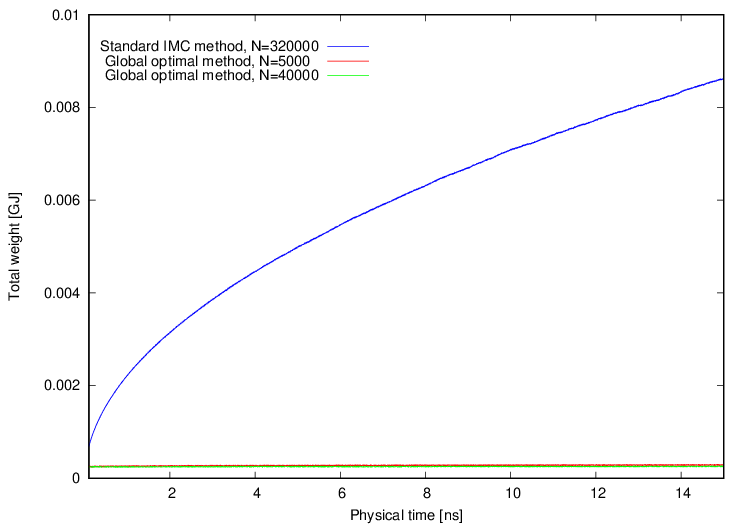}
\caption{Comparison of the total energy of Monte Carlo particles emitted per time step in Example 4.}
\label{exam2_4}
\end{figure}

\section{Conclusion}




This paper introduces a global optimal reference field method based on thermal radiative transport Monte Carlo simulations and derives a generalized reference field form for the radiative transport IMC method. 
Unlike our previous work, which provided an approximate solution to the global reference field, we reduced it to a linear programming problem, which can be solved exactly. Based on this observation, we proposed the global optimal reference field method to reduce the statistic noise with higher performance. 
To compare methods, we used the Simplex method, the interior-point method, and the MindOpt solver. Numerical experiments reveal:

1). The MindOpt solver exhibits superior efficiency in solving linear programming problems, particularly as the problem dimension increases, where its computational advantage becomes more pronounced than the other two methods.

2). In most computational regions, the global optimal reference field method typically shows statistical errors two orders of magnitude lower than those of the classical IMC method.

3). Among the methods compared, the global optimal reference field method using the MindOpt solver demonstrates the highest computational efficiency and a significantly lower performance factor, underscoring its advantage in minimizing statistical errors.

In the future, we will extend our method into the two-dimensional case and explore combining it with traditional variance reduction methods, such as the weight window method, to improve the computational efficiency of Monte Carlo methods for thermal radiation transport problems.

\bibliographystyle{unsrt}
\biboptions{numbers,sort&compress} 
\bibliography{ref}
\end{document}